\documentclass[11pt]{article}
\usepackage[margin=1in]{geometry}
\usepackage{amsmath,amssymb,amsthm,booktabs,graphicx,hyperref,url}
\usepackage[T1]{fontenc}
\makeatletter
\g@addto@macro\ttfamily{\hyphenchar\font=`\-\relax}
\makeatother
\makeatletter
\renewcommand\section{\@startsection{section}{1}{\z@}%
  {-3.5ex \@plus -1ex \@minus -.2ex}{2.3ex \@plus.2ex}%
  {\raggedright\normalfont\Large\bfseries}}
\renewcommand\subsection{\@startsection{subsection}{2}{\z@}%
  {-3.25ex\@plus -1ex \@minus -.2ex}{1.5ex \@plus .2ex}%
  {\raggedright\normalfont\large\bfseries}}
\renewcommand\subsubsection{\@startsection{subsubsection}{3}{\z@}%
  {-3.25ex\@plus -1ex \@minus -.2ex}{1.5ex \@plus .2ex}%
  {\raggedright\normalfont\normalsize\bfseries}}
\makeatother
\newtheorem{theorem}{Theorem}
\newtheorem{lemma}{Lemma}
\newtheorem{proposition}{Proposition}

\title{WitCert: Sound Runtime Risk Observability\\ and Gating for KV-Cache Quantization}
\author{%
  Fanzhe Wei\\ Metask Lab\\ {\small\texttt{whyer1@gmail.com}}
  \and Li Liu\\ Metask Lab\\ {\small\texttt{muriel092611@gmail.com}}
  \and Ziyang Wang\\
  {\small\begin{tabular}[t]{@{}c@{}}
   Logistics Science and Technology Innovation Integrated\\
   Development Research Center of Shaanxi Logistics Group,\\
   Xi'an Jiaotong University, Xi'an, China\\
   and\\
   School of Management,\\
   Xi'an Jiaotong University, Xi'an, China\\[2pt]
   \texttt{wang-zy13@tsinghua.org.cn}
   \end{tabular}}
  \and Chenyu Wang\\ Metask Lab\\ {\small\texttt{cnados@gmail.com}}}
\date{}

\begin{document}

\maketitle

\begin{abstract}
KV-cache quantization is validated today by offline benchmark averages; a deployed system cannot tell whether compression is damaging the request it is serving right now. We give it a provably sound runtime meter---a ``DTrace for KV quantization'': a per-(layer, head, step) upper bound on the total variation between exact and compressed attention. The meter has two tiers: a deterministic band-norm-witness bound, sound for \emph{any} cache-preserving black-box quantizer and for \emph{any} query (adaptive-safe, worst-case Cauchy--Schwarz plus RoPE band-unitarity), and a tighter probabilistic certificate for a controlled subtractively-dithered INT8 quantizer under an explicit request-level failure budget (stated for non-adaptive queries; core theorems machine-checked in Lean~4). Three results. \emph{Observability}: the meter enters SGLang through an env-guarded patch, and any scheme registered as one tensor function is measured in live serving. \emph{Repair}: meter-driven gating---risk-ranked where the witness is saturated, certified where it is informative---empirically restores the quality floor at benchmark scale, e.g.\ raw-cast fp8 from 22.8 back to 79.7 on hard RULER tasks with the difference from uncompressed bounded at $[+0.0,+0.8]$ by a paired test. \emph{Analysis}: aggressive schemes survive on cross-layer error cancellation, not per-step fidelity---in a 28-layer sweep, no single layer's pollution alone loses anything (0/28)---and the certified int8 cache serves $1.88\times$ more KV tokens at the same memory in SGLang.
All artifacts, guards, and the Lean development are released at
\url{https://github.com/metask-ai/witcert-kv-certificates}; every number
regenerates from the shipped artifacts by one command.
\end{abstract}

\section{Introduction}
\label{sec:intro}
Autoregressive decoding requires caching keys and values for every past token; beyond 32k context a 7B-class model spends gigabytes on the KV cache, and at 128k the cache exceeds the weights themselves. (Our quality evaluation spans 4--32k contexts; the memory account is measured to 128k; see the scope statement of Sec.~\ref{sec:regime}.) The four established compression routes---token eviction (H2O~\cite{h2o2023}, SnapKV~\cite{snapkv2024}), channel/frequency pruning (ThinK~\cite{think2024}, RAP~\cite{rap2026}), low-rank factorization (Palu~\cite{palu2025}, KQ-SVD~\cite{kqsvd2025}), and quantization (KIVI~\cite{kivi2024}, KVQuant~\cite{kvquant2024})---share one defect: they are \emph{open loop}; dynamic sparsity (Quest~\cite{quest2024}) selects at run time but is likewise validated only offline. The policy is fixed offline or heuristically, and the error actually incurred on the current input is neither observable nor controllable. We observed the extreme form of this defect: under a query-agnostic protocol (the regime studied by the query-agnostic compression line~\cite{kvzip2025}), SnapKV drops from 100 to 1.1--65 on RULER~\cite{ruler2024} needle tasks while the serving system emits no signal whatsoever (Sec.~\ref{sec:ruler}).

Turning compression from open loop into closed loop requires a \emph{runtime-computable} error bound. Runtime-Certified Quantized Attention~\cite{runtime_certified_2026} first supplied such certificates, but its bound is worst-case ($\mathrm{TV} \le \tanh(\max|\varepsilon|)$, data independent): we measured that at every practical threshold ($\tau \le 0.2$) it authorizes \emph{zero} compression---the certificate holds, but is too conservative to be useful.

Our starting point is therefore: \textbf{for a certificate to authorize compression, it must be data dependent}. Two obstacles follow. First, a data-dependent bound needs the per-token error magnitude, yet the error depends on a query that has not yet arrived. Second, modern caches store keys \emph{after} the RoPE rotation~\cite{su2021roformer}, so any witness must be invariant to position or it must be recomputed per position. Our answer is the residual witness (Sec.~\ref{sec:tiera}): the band-wise norms of the quantization residual solve both at once---they are query independent, computable once at write time, and position invariant because RoPE acts unitarily within a frequency band (Lemma~\ref{lem:rope}).

\textbf{Terminology.} We call the runtime quantity the \emph{meter}: a sound upper bound on the attention total variation, reported as $\min(1,\cdot)$ since $\mathrm{TV}\le 1$ always. The stored per-token summary it is computed from is the \emph{witness}. When the meter is below saturation it is a \emph{certificate}---a mathematically valid guarantee; above saturation its raw value remains an empirically discriminative \emph{risk score} but guarantees nothing. Gating experiments are labelled \emph{certified} ($\tau<1$) or \emph{risk-ranked} ($\tau\ge 1$) accordingly.

Contributions: (1) a sound, position-invariant \emph{universal meter}: the band-norm witness theorem for black-box \emph{cache-preserving} schemes (Sec.~\ref{sec:tiera}), plus a cautionary counterexample, its sound replacement bound, saturation-aware $\min(1,\cdot)$ reporting, and a two-sided companion bound whose measured non-benefit we report (Secs.~\ref{sec:counterexample}, \ref{sec:twosided}); (2) a uniform sub-Gaussian runtime certificate for a controlled quantizer, with an explicit request-level failure budget and \emph{no} witness storage, plus two negative controls on bound selection (Sec.~\ref{sec:subg}); (3) a mechanism finding that benchmarks cannot produce: the observable failure of aggressive schemes is \emph{cross-layer accumulation}, not per-step infidelity---in a 28-layer sweep no single layer's pollution alone loses anything (0/28)---which explains both why aggressive schemes survive and why per-step bounds are intrinsically conservative (Sec.~\ref{sec:mechanism}); (4) a system implementation in which the meter is structurally cheap and CUDA-graph capturable; the certify--fallback loop is closed and validated in our standalone harness (head- and block-granular paging, Secs.~\ref{sec:closedloop}, \ref{sec:blockpage}) and, in live SGLang~\cite{sglang2024} serving, through the observatory pool at benchmark scale (Sec.~\ref{sec:gatedeval}); the \emph{packed} certified cache in SGLang has certificate computation, storage, telemetry and fallback integrated, and its gated end-to-end quality is validated in Sec.~\ref{sec:h200}; the remaining systems cost is repeated page-in without a request-level repair cache.

\section{Related Work}
\textbf{Low-rank.} Palu~\cite{palu2025} and STAR-KV~\cite{starkv2026} both compress the pre-RoPE representation and reconstruct at decode time, because RoPE blocks weight absorption on the key side (Palu: ``the non-linear nature of these positional embeddings prevents the matrix fusion''). Palu itself prices this cost: $2.91\times$ attention speedup with RoPE versus $6.17\times$ without. KQ-SVD~\cite{kqsvd2025} gives a closed-form optimal low-rank factorization with a global Lipschitz bound on the attention output, but has no margin or per-row analysis and does not discuss RoPE at all. \emph{No prior work performs certified compression of post-RoPE keys.}

\textbf{Sketching.} BalanceKV~\cite{balancekv2025} provides an end-to-end attention-output guarantee with a matching lower bound, but its constant $e^{2r^2/\sqrt{d}}$ is roughly $e^{39.8}$ at measured norms, and RoPE is never mentioned---the theory implicitly treats keys as position independent. Compactor~\cite{compactor2025} bounds the spectrum of the \emph{pre}-RoPE $K^\top K$ and concedes that its experimental parameters violate its own theorem's conditions.

\textbf{RoPE-aware compression.} RAP~\cite{rap2026} prunes RoPE column pairs and proves commutation with rotation---the rigidity proposition of Appendix~\ref{app:rigidity} supplies the converse completeness---but requires LoRA distillation to recover. EliteKV~\cite{elitekv2025} and KV-Latent~\cite{kvlatent2025} perform frequency selection or resampling, both requiring retraining and offering no bound. Block-GTQ~\cite{blockgtq2026} is the closest neighbour on the quantization axis (RoPE-aware bit allocation, real packed path) but provides no runtime certificate.

\textbf{Certified compression.} Runtime-Certified Quantized Attention~\cite{runtime_certified_2026} is our primary interlocutor; we borrow its two-term decomposition and telemetry framing. The difference is that its softmax bound is a worst-case $\tanh(\Delta)$ with $\Delta$ determined by the quantization step and query norm, independent of the per-token error actually incurred. We compare both against that \emph{propagation bound} (verified sound by our own adversarial testing) under identical witnesses and against a full-system reproduction from the paper specification (Sec.~\ref{sec:h200}). The latter includes block-level precision escalation, INT4 values, CPU backing storage and a two-pass scan; because no public implementation is available, we declare every reproduction deviation and report fidelity checks. The bound-shape comparison is deliberately two-sided: the det-tanh gate is cheaper in the loose-witness regime, whereas ours is tighter for the controlled-quantizer regime. On the competitive map: RaBitQCache~\cite{rabitqcache2026} (binary-proxy score with a proven error bound, no TV certificate), Block-GTQ~\cite{blockgtq2026} (RoPE bit allocation, real packed path, no runtime certificate), KVQuant~\cite{kvquant2024} (dense-and-sparse per-vector outliers, pre-RoPE), and KVarN~\cite{kvarn2026}/CommVQ~\cite{commvq2025} (extreme low-bit baselines); the minimax analysis of~\cite{riskkv2026} likewise argues for budgeting compression against the observed query distribution. A rigidity proposition (stated and proved in Appendix~\ref{app:rigidity}) characterizes static RoPE-equivariant linear compression as exactly frequency-pair selection with per-frequency scalars; it explains the field's current split---the RoPE-structure-aware methods carry no guarantee, and the methods that carry a guarantee route around RoPE---and justifies handling outliers at \emph{pair} granularity. A companion paper~\cite{witprobe-companion} generalizes this line beyond the single KV cache: typed per-stage observability contracts over heterogeneous attention memory (latent caches, learned sparse selectors, pooled hierarchies, recurrent state), with the present meter serving as the storage-stage instance.

\textbf{Conformal prediction and distribution-free UQ.} Our request-level failure budget $\delta_{\mathrm{req}}$ and its union-bound allocation speak the language of distribution-free uncertainty quantification~\cite{vovk2005,angelopoulos2021}, so a natural question is why we do not simply calibrate a split-conformal quantile of the true TV. The two guarantees are not interchangeable. Split conformal yields \emph{marginal} coverage---averaged over requests exchangeable with a calibration set---whereas the meter is a \emph{per-request, conditional} statement: it holds for the request being served, with no calibration set and no exchangeability assumption. The distinction matters precisely because compression fails out of distribution: the silent needle collapses of Sec.~\ref{sec:ruler} occur exactly where a calibration set drawn from typical traffic stops being representative. Conformal calibration would also need the true TV as its conformity score, i.e.\ one uncompressed forward pass per calibration point---the cost the runtime meter avoids. The two are complementary rather than competing: calibrating the gate threshold $\tau$ conformally on live traffic is attractive future work.

\section{A Plausible but Unsound Certificate, and its Sound Replacement}
\label{sec:counterexample}
\textbf{Setup and notation.} Fix one (layer, head, decode step). Let $p$ and $\tilde p$ be the softmax attention distributions over the $S$ cached tokens computed from exact and compressed keys respectively, $\varepsilon_t$ the logit error of token $t$, and $c_t \ge |\varepsilon_t|$ any per-token error bound. We meter the total variation $\mathrm{TV}(p,\tilde p) = \tfrac12 \sum_t |p_t - \tilde p_t|$. TV is the right target for two reasons: it is the operational quantity---the amount of attention mass that moves, so a silently dropped needle token is by definition a TV event---and it propagates linearly to the attention output: $\|y-\hat y\| \le 2\,\mathrm{TV}\max_t\|v_t\| + \sum_t \tilde p_t \rho^v_t$, where $\rho^v_t := \|\hat v_t - v_t\|$ is token $t$'s value reconstruction error (Sec.~\ref{sec:subg}), so a TV budget is an output-error budget. All norms are Euclidean; we write $\mathrm{TV}_K$ when the value side is also in scope. Since $\mathrm{TV} \le 1$ holds trivially, any upper bound is informative only below $1$; the meter is therefore reported as $\min(1,\cdot)$ throughout.

\textbf{The flawed certificate.} With $\bar c = \mathbb{E}_{\tilde p}[c]$ and $c_{\max} = \max_t c_t$, a natural-looking bound is
\begin{equation}
\mathrm{TV}(p,\tilde p) \;\overset{?}{\le}\; \tfrac12 e^{2c_{\max}}\bigl(\mathbb{E}_{\tilde p}\bigl[|c-\bar c|\bigr] + \bar c\,(e^{c_{\max}}-1)\bigr),
\label{eq:flawed}
\end{equation}
whose intuition is ``errors are attention weighted, so a small \emph{dispersion of the bound} implies small TV.'' This is wrong: $c_t$ constrains only $|\varepsilon_t|$. When all $c_t$ are equal the first-order term vanishes, yet errors of opposite sign still produce an $O(c)$ change in TV.

\textbf{Minimal counterexample.} Take $\tilde p = (\tfrac12,\tfrac12)$, $c = (0.1,0.1)$, $\varepsilon = (0.1,-0.1)$. The true total variation is $0.049834$, while~\eqref{eq:flawed} returns $0.006423$---a factor of $7.8$. Notably, \eqref{eq:flawed} passed $32{,}256$ checks in our own earlier experiments because those checks compared \emph{mean} TV against \emph{mean} bound over a batch of queries; per-query checking exposes the violation immediately. We report this as a methodological lesson: certificates must be validated per query and adversarially, never in aggregate.

\begin{theorem}[Sound replacement]\label{thm:eform}
Let $A = \mathbb{E}_{\tilde p}[e^{c}]$. If $|\varepsilon_t| \le c_t$ for all $t$, then
$\mathrm{TV}(p,\tilde p) \le \tfrac12\,(A^2-1)$.
\end{theorem}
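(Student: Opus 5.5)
The plan is to reduce the total variation to a weighted sum of likelihood-ratio deviations, and then to control the normalizing constant with Jensen's inequality. Write the exact logits as the compressed logits plus the error $\varepsilon_t$; the opposite sign convention is handled identically, because only $|\varepsilon_t|\le c_t$ is used. Then
\[
p_t = \frac{\tilde p_t\, e^{\varepsilon_t}}{Z}, \qquad Z := \mathbb{E}_{\tilde p}\bigl[e^{\varepsilon}\bigr],
\]
so the likelihood ratio is $r_t := p_t/\tilde p_t = e^{\varepsilon_t}/Z$, and
\[
\mathrm{TV}(p,\tilde p)=\tfrac12\sum_t \tilde p_t\,|r_t-1|.
\]
Tokens with $\tilde p_t=0$ also have $p_t=0$, so they contribute nothing and can be discarded.

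The key intermediate step is a two-sided bound on $Z$ in terms of $A$. The upper bound is immediate: $e^{\varepsilon_t}\le e^{c_t}$ gives $Z\le A$. For the lower bound, $e^{\varepsilon_t}\ge e^{-c_t}$ gives $Z\ge \mathbb{E}_{\tilde p}[e^{-c}]$, and Jensen's inequality applied to the convex map $x\mapsto 1/x$ gives $\mathbb{E}_{\tilde p}[e^{-c}]\ge 1/\mathbb{E}_{\tilde p}[e^{c}]=1/A$. Combining these with $|\varepsilon_t|\le c_t$ yields, per token,
\[
\frac{e^{-c_t}}{A}\;\le\; r_t \;\le\; A\,e^{c_t}.
\]

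Next, set $x_t := A e^{c_t}$. Since $A\ge 1$ and $c_t\ge 0$, we have $x_t\ge 1$, and the bracket above reads $x_t^{-1}\le r_t\le x_t$. I then need the elementary fact that for $x\ge 1$,
\[
1-x^{-1}\le x-1,
\]
which holds because $x + x^{-1}\ge 2$. It gives $|r_t-1|\le \max(x_t-1,\,1-x_t^{-1}) = x_t-1$. This is the step that secures the factor $\tfrac12$: bounding only the positive part of $r_t-1$ would lose it. Summing,
\[
\mathrm{TV}\le\tfrac12\sum_t\tilde p_t\,(A e^{c_t}-1)=\tfrac12\,(A\cdot A-1)=\tfrac12(A^2-1).
\]

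I expect the main obstacle to be the lower bound $Z\ge 1/A$. A naive estimate $Z\ge e^{-c_{\max}}$ would reintroduce a dependence on $c_{\max}$, which is exactly the worst-case behaviour the theorem is meant to avoid. Routing the bound through Jensen keeps everything expressed in the single data-dependent quantity $A$. I would also check tightness on the counterexample above ($A=e^{0.1}$, giving bound $\approx 0.1107\ge 0.0498$) as a sanity test.
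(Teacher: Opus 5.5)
Your proposal is correct and is essentially the paper's proof: you bracket the normalizer between $1/A$ and $A$, deduce $|p_t/\tilde p_t-1|\le Ae^{c_t}-1$ per token, and average under $\tilde p$. The only differences are cosmetic: you use Jensen on $x\mapsto 1/x$ where the paper uses Cauchy--Schwarz (both give $\mathbb{E}_{\tilde p}[e^{c}]\,\mathbb{E}_{\tilde p}[e^{-c}]\ge 1$), and your bracket $x_t^{-1}\le r_t\le x_t$ is the paper's $|\log(p_t/\tilde p_t)|\le c_t+\log A$ written without logarithms.
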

\begin{proof}
Since $|\varepsilon_t| \le c_t$, Cauchy--Schwarz gives $\mathbb{E}_{\tilde p}[e^{c}]\,\mathbb{E}_{\tilde p}[e^{-c}] \ge 1$ and hence $1/A \le \mathbb{E}_{\tilde p}[e^{-\varepsilon}] \le A$, so
$\bigl|\log(p_t/\tilde p_t)\bigr| \le c_t + \log A$ and therefore $|p_t/\tilde p_t - 1| \le A e^{c_t} - 1$.
Taking expectation under $\tilde p$ gives $\mathrm{TV} = \tfrac12\mathbb{E}_{\tilde p}\bigl[|p/\tilde p - 1|\bigr] \le \tfrac12 (A^2-1)$.
\end{proof}
The first-order behaviour is $\bar c$, the attention-weighted \emph{mean} of the bound. We validated the replacement with 500{,}000 adversarial trials (extreme sign patterns, peaked and flat distributions, large $c$) and froze both the counterexample and the adversarial suite as a regression guard in the repository. Numerically we evaluate $A$ in the log domain and return the trivial bound $1$ on overflow; we never truncate the error parameter itself.

\section{The Two-Tier Meter: a Universal Witness Bound and a Sub-Gaussian Certificate}
\label{sec:subg}

\subsection{Tier A: a deterministic witness bound for
  black-box cache-preserving schemes}
\label{sec:tiera}
Modern caches store keys \emph{after} the RoPE rotation, so any stored error summary must be invariant to the (not yet known) query position. Write $d$ for the head dimension and group the $d/2$ RoPE frequency pairs into $B$ contiguous bands (the system integration uses $B=16$, i.e.\ 32 B/tok/head; the offline six-quantizer study of Fig.~\ref{fig:f5} uses $B=8$, 16 B; band $b$ spans the coordinates of $d/(2B)$ frequency pairs). \textbf{Scope.} Tier A covers any scheme that, for every cached token, materializes a reconstructed key $\hat k_t$ of the same shape---i.e.\ arbitrary quantizers and any transform whose output can be written back as a per-token key. Eviction, token pooling/merging, latent caches and recurrent-state compressions store \emph{no} per-token reconstruction, so no residual exists and Tier A does not apply to them; certifying eviction online is exactly open problem (3) of Sec.~\ref{sec:open}. Within this scope, at write time let $r_t = \hat k_t - k_t$ be the residual, and store the witness
\begin{equation}\label{eq:witness}
w_{t,b} \;=\; \|r_{t,b}\|, \qquad b = 1,\dots,B,
\end{equation}
the per-band Euclidean norms of the residual (stored in fp16, $2B = 32$ bytes per token per head, written B/tok/head below).

\begin{lemma}[RoPE band unitarity]\label{lem:rope}
Let $R_n$ be the RoPE rotation at relative position $n$ (we reserve $m$ for the outlier-pair count): block diagonal over frequency pairs, acting on pair $j$ as the $2\times2$ rotation by angle $n\theta_j$. Every frequency pair lies inside a single band, so for every band $b$, every $x$ and every $n$, $\|(R_n x)_b\| = \|x_b\|$. In particular the witness of a post-RoPE residual equals the witness of the corresponding pre-RoPE residual, and is invariant to the query position.
\end{lemma}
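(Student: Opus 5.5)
The plan is to reduce the claim to a statement about a single $2\times2$ rotation and then sum over the pairs that make up a band. First I fix the coordinate convention: index the $d/2$ frequency pairs $j=1,\dots,d/2$, with pair $j$ occupying coordinates $(2j-1,2j)$, or whichever fixed pairing the model's RoPE uses (e.g.\ the half-split convention $(j, j+d/2)$). The bands are defined as unions of whole frequency pairs: band $b$ consists of the $d/(2B)$ consecutive pairs $j\in J_b$, and the $J_b$ partition $\{1,\dots,d/2\}$. So $x_b$ is the concatenation of the pair-subvectors $x^{(j)}\in\mathbb{R}^2$ for $j\in J_b$. This is the one point needing care: the statement ``every frequency pair lies inside a single band'' must hold for the actual RoPE pairing layout. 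With the half-split layout, the band must collect coordinates $j$ and $j+d/2$ together rather than a contiguous coordinate slice, so I will define $r_{t,b}$ in \eqref{eq:witness} by pair index, not by raw coordinate range.

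Next, since $R_n$ is block diagonal over pairs, $(R_n x)^{(j)} = G(n\theta_j)\,x^{(j)}$, where $G(\phi)$ is the planar rotation by $\phi$. Because $G(\phi)^\top G(\phi)=I$, we get $\|G(\phi)y\|=\|y\|$ for every $y\in\mathbb{R}^2$. Then
\[
\|(R_n x)_b\|^2=\sum_{j\in J_b}\|G(n\theta_j)x^{(j)}\|^2=\sum_{j\in J_b}\|x^{(j)}\|^2=\|x_b\|^2 .
\]
This holds for every $b$, $x$, $n$, and also for negative or absolute positions, since only orthogonality of each block is used.

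For the consequences, I write the cached post-RoPE key as $k_t = R_{n_t}k_t^{\rm pre}$. Two cases arise.
\begin{itemize}
\item If the scheme compresses pre-RoPE and rotates afterwards, then $\hat k_t = R_{n_t}\hat k^{\rm pre}_t$. By linearity $r_t = R_{n_t}(\hat k^{\rm pre}_t-k^{\rm pre}_t)$, and the display gives equal band norms.
\item If the scheme compresses the post-RoPE key directly, the residual $r_t$ is the stored object. Its pre-RoPE counterpart is $R_{n_t}^{-1}r_t = R_{-n_t}r_t$, which has the same band norms by the same identity.
\end{itemize}

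Query-position invariance follows the same way. The logit error is $\varepsilon_t=\langle q,r_t\rangle$ with $q=R_{n_q}q^{\rm pre}$. Equivalently, it is $\langle R_{-n_q}q, \cdot\rangle$ against a rotated residual. Any band-wise relative rotation leaves $w_{t,b}$ unchanged, so the stored witness never needs recomputation when the query position becomes known.

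I expect no analytic obstacle. The only substantive step is the bookkeeping of the first paragraph: ensuring the band partition respects the pairing used by the model's RoPE implementation.
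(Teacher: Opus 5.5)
Your proof is correct and follows the paper's argument: restricted to a band, $R_n$ is a direct sum of $2\times 2$ rotations and hence orthogonal on that band's subspace, so band norms (and therefore the witness) are preserved for every $n$. Your extra bookkeeping adds care but not a new idea. You define bands by pair index, so that a half-split RoPE layout still keeps each pair inside one band, and you treat the pre-/post-RoPE and query-position cases explicitly via $R_n^{-1}=R_{-n}$; both simply spell out what the paper's one-line proof takes from its hypothesis ``every frequency pair lies inside a single band.''
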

\begin{proof}
$R_n$ restricted to band $b$ is a direct sum of $2\times2$ rotations, hence orthogonal on the band's subspace.
\end{proof}

\begin{theorem}[Sound black-box logit bound]\label{thm:tiera}
For any query $q$ (post-RoPE, softmax scale $1/\sqrt d$) and any cached token $t$ with residual $r_t$,
\[
|\varepsilon_t| \;=\; \frac{|q\cdot r_t|}{\sqrt d} \;\le\; \frac{1}{\sqrt d}\sum_{b=1}^{B} \|q_b\|\,w_{t,b} \;=:\; c_t,
\]
by Cauchy--Schwarz applied within each band together with Lemma~\ref{lem:rope}. The bound is computable at decode time from the current query and the stored witness alone, for any cache-preserving scheme, with no assumption on the residual distribution; feeding $c_t$ into Theorem~\ref{thm:eform} yields a sound TV meter for any such scheme.
\end{theorem}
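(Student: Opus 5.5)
The plan is to reduce everything to a per-band Cauchy--Schwarz inequality and then check that the residual actually seen by the query has the same band norms as the stored witness. Write the compressed logit as $\tilde\ell_t = q\cdot\hat k^{\mathrm{eff}}_t/\sqrt d$ and the exact one as $\ell_t = q\cdot k^{\mathrm{eff}}_t/\sqrt d$. Here $k^{\mathrm{eff}}_t$ is the key in the frame in which the score is actually computed. In the post-RoPE cache this is the stored key itself. If the implementation instead rotates at read time, it is $R_n k_t$ for the relevant relative position $n$. The first step is to observe that in either case the logit error is linear in a residual: $\varepsilon_t = \tilde\ell_t - \ell_t = q\cdot r^{\mathrm{eff}}_t/\sqrt d$, with $r^{\mathrm{eff}}_t = r_t$ or $r^{\mathrm{eff}}_t = R_n r_t$. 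This uses only that the scheme is cache-preserving, so that a same-shape $\hat k_t$ exists and $r_t=\hat k_t-k_t$ is well defined. No assumption on the distribution of $r_t$ is needed.

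Second, I decompose the inner product along the band partition. Each band consists of whole frequency pairs and the bands are disjoint and cover all $d$ coordinates, so $q\cdot r^{\mathrm{eff}}_t = \sum_{b=1}^B q_b\cdot r^{\mathrm{eff}}_{t,b}$. By the triangle inequality and then Cauchy--Schwarz within each band, $|q\cdot r^{\mathrm{eff}}_t| \le \sum_b \|q_b\|\,\|r^{\mathrm{eff}}_{t,b}\|$.

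Third, I invoke Lemma~\ref{lem:rope} to replace $\|r^{\mathrm{eff}}_{t,b}\|$ by $\|r_{t,b}\| = w_{t,b}$, since $R_n$ preserves every band norm. This is the step where position invariance enters: the witness written once at cache-write time remains valid for whatever query position arrives later. Dividing by $\sqrt d$ gives $|\varepsilon_t|\le c_t$.

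Finally, $c_t$ depends only on $q$ (available at decode time) and the stored $w_{t,\cdot}$, so it is runtime-computable. Because $|\varepsilon_t|\le c_t$ holds for every $t$ deterministically, Theorem~\ref{thm:eform} applies verbatim and yields $\mathrm{TV}\le\frac12(A^2-1)$. This conclusion holds for any query, including adaptively chosen ones.

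The main obstacle is the bookkeeping in the third step. I must make sure the residual in the score frame is exactly a band-preserving orthogonal image of the stored residual. Concretely, the rotation must act identically on $k_t$ and $\hat k_t$, so that $R_n\hat k_t - R_n k_t = R_n r_t$. In addition, no band boundary may split a frequency pair; otherwise Lemma~\ref{lem:rope} fails. I will state both points explicitly, using the contiguous band construction with $d/(2B)$ whole pairs per band. If the quantizer acts after rotation (the post-RoPE cache), the first point is trivial and the lemma is used only to justify that the witness is position-independent.
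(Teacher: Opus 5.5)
Your proposal is correct and follows the paper's argument: split $q\cdot r_t$ along the pair-aligned bands, apply Cauchy--Schwarz within each band, use Lemma~\ref{lem:rope} to identify the band norms of the residual in the scoring frame with the stored witness, and feed the resulting deterministic bound $|\varepsilon_t|\le c_t$ into Theorem~\ref{thm:eform}. Your remark that, for a post-RoPE cache, the per-band bound needs no rotation at all, with the lemma serving only to make the witness position-invariant, is accurate and slightly more precise than the paper's one-line justification.
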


Three remarks. (i) \emph{Proof--kernel correspondence}: the deployed kernel computes exactly this quantity---\texttt{witness\_of} stores the witness~\eqref{eq:witness} at write time and the decode kernel accumulates $c_t = \mathrm{sm\_scale}\cdot\sum_b \|q_b\| w_{t,b}$---so the theorem and the implementation match term by term. (ii) \emph{Cost}: 32 B/tok/head of storage ($12.5\%$ of an fp16 key) and one length-$B$ dot product per (query, token), which shares the attention kernel's data pass (Sec.~\ref{sec:sglang}). (iii) \emph{Tightness}: the bound is tight when the residual concentrates in few bands and is loose when residuals are dense and sign-alternating---$1289\times$ loose at 4 bits---which is why the universal meter saturates for 1--2 bit schemes: \emph{with the current band-norm witness}, per-step certification of such schemes is vacuous---a statement about this witness, not an impossibility proof; whether a better constant-size witness exists below 8 bits is precisely open problem (1) of Sec.~\ref{sec:open}. (The cross-layer mechanism of Sec.~\ref{sec:mechanism} explains why any per-step bound, however tight, must be conservative for aggressive schemes.)

\begin{figure}[t]\centering
\includegraphics[width=0.92\linewidth]{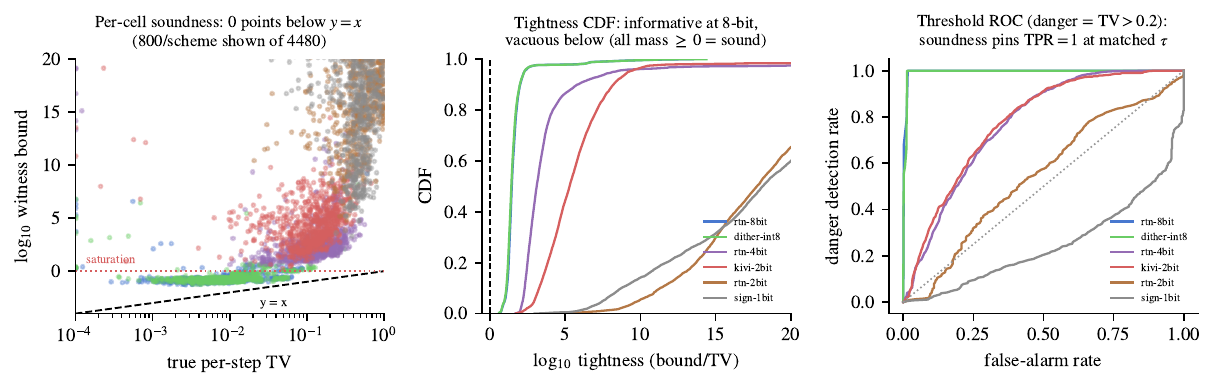}
\caption{The universal meter on six black-box quantizers, per-cell diagnostics ($n{=}4480$ cells each, unclamped log-domain values). \emph{Left}: per-cell scatter---zero points below $y{=}x$ (soundness); below the saturation line the meter is also informative. \emph{Middle}: tightness CDF (bound/TV)---8-bit schemes concentrate near $10^{1.5}$, 1--2 bit schemes are vacuous by many orders of magnitude; all mass $\ge 0$ is soundness again. \emph{Right}: ROC of the raw bound as a danger classifier (danger $=$ TV$>$0.2)---8-bit schemes detect essentially all danger at near-zero false-alarm rate, while sign-1bit tracks the diagonal, i.e.\ with the current witness its per-step score carries no signal (tightening it is open problem~(1)).}\label{fig:f5}
\end{figure}

\subsection{Tier B: a sub-Gaussian certificate for a controlled quantizer}
\label{sec:tierb}
\textbf{Subtractive dither.} We quantize with $\hat x = s\,(\mathrm{round}(x/s + \xi) - \xi)$ where the dither $\xi \sim \mathcal{U}[-\tfrac12,\tfrac12)$ is regenerated deterministically at read time. That the residual is then independent of the input and uniform per channel is classical dither theory~\cite{lipshitz1992,gray1993}; our contribution is wiring it into a KV-cache kernel under a proof--kernel contract and propagating it to a TV certificate. We choose $s = \mathrm{amax}_{\mathrm{blk}}/(Q-\tfrac12)$, with $Q=127$ the top INT8 level (so $Q-\tfrac12 = 126.5$), so that clipping \emph{never} occurs on non-outlier channels---a prerequisite for the assumption below.

\begin{lemma}[Uniform sub-Gaussian proxy]\label{lem:subg}
For $X \sim \mathcal{U}[-a,a]$ we have $\mathbb{E}[e^{\lambda X}] = \sinh(\lambda a)/(\lambda a) \le e^{\lambda^2 a^2/6}$. Hence the residual of subtractive dither with step $s$ satisfies $\mathbb{E}[e^{\lambda X}] \le e^{\lambda^2 s^2/24}$ and is sub-Gaussian with proxy exactly $s^2/12$, i.e.\ equal to its variance.
\end{lemma}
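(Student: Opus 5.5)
The plan has three steps: compute the moment generating function, bound it by a Gaussian MGF via power series, then substitute the dither parameters and identify the proxy.

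\textbf{MGF computation.} For $X\sim\mathcal U[-a,a]$ and $\lambda\neq 0$,
\[
\mathbb{E}[e^{\lambda X}]=\frac{1}{2a}\int_{-a}^{a}e^{\lambda x}\,dx=\frac{e^{\lambda a}-e^{-\lambda a}}{2\lambda a}=\frac{\sinh(\lambda a)}{\lambda a},
\]
and the value at $\lambda=0$ is $1$ by continuity.

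\textbf{Comparison with a Gaussian MGF.} Set $u=\lambda a$. I will compare the two even power series term by term:
\[
\frac{\sinh u}{u}=\sum_{k\ge0}\frac{u^{2k}}{(2k+1)!},\qquad e^{u^2/6}=\sum_{k\ge0}\frac{u^{2k}}{6^k k!}.
\]
It then suffices to show $(2k+1)!\ge 6^k k!$ for all $k\ge0$. I will prove this by induction.
\begin{itemize}
\item The base case $k=0$ is an equality.
\item Going from $k$ to $k+1$, the left side gains the factor $(2k+2)(2k+3)$ and the right side gains the factor $6(k+1)$.
\item Since $(2k+2)(2k+3)=2(k+1)(2k+3)\ge 6(k+1)$ exactly when $2k+3\ge3$, the step holds for every $k\ge0$.
\end{itemize}
Because all terms are nonnegative in $u^2$, the termwise inequality gives $\sinh(u)/u\le e^{u^2/6}$ for all real $u$.

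\textbf{Dither residual and proxy.} Classical subtractive-dither theory (cited in the text) gives that the residual is uniform on $[-s/2,s/2)$, independent of the input. Substituting $a=s/2$ gives
\[
\mathbb{E}[e^{\lambda X}]\le e^{\lambda^2 s^2/24}=e^{\lambda^2\sigma^2/2}\quad\text{with}\quad \sigma^2=s^2/12.
\]
This is exactly the variance of $\mathcal U[-s/2,s/2]$. The half-open endpoint has measure zero, so it changes nothing.

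\textbf{Why the proxy is exact.} Any valid proxy must be at least the variance: if $e^{\lambda^2\sigma^2/2}\ge\mathbb{E}e^{\lambda X}$ for all $\lambda$, then expanding both sides to second order around $\lambda=0$ forces $\sigma^2\ge \mathrm{Var}X$. So $s^2/12$ is the optimal proxy.

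\textbf{Main obstacle.} The only nontrivial step is the factorial inequality. The key observation is that the coefficient $1/6$ is what makes the $k=1$ terms match exactly ($1/3!=1/6$), and the induction then handles every higher-order term.
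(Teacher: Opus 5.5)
Your proof is correct and takes essentially the same route as the paper's Lean development. That development compares $\sinh(u)/u$ termwise with $e^{u^2/6}$ through the factorial inequality $6^k k!\le(2k+1)!$, proved by the same induction with step factor $(2k+2)(2k+3)\ge 6(k+1)$, and then sets $a=s/2$ to read off the proxy $s^2/12$. Your second-order-expansion argument that no proxy can fall below the variance adds something the paper only asserts; note also that the uniform-residual fact you cite depends on the no-clipping scale $s=\mathrm{amax}_{\mathrm{blk}}/(Q-\tfrac12)$.
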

By Lemma~\ref{lem:subg}, the logit error has variance proxy $\varsigma^2_t = \tfrac{1}{d}\sum_c q_c^2\, s_{c,t}^2/12$ (the scale $s_{c,t}$ is block-constant in $t$; the same quantity is computed in-kernel as $\varsigma^2_{h,t}$ in Sec.~\ref{sec:storage}) and, with probability at least $1-\delta_{\mathrm{loc}}$ jointly over $S$ tokens, $|\varepsilon_t| \le u_t := \sqrt{2 \varsigma^2_t \log(2S/\delta_{\mathrm{loc}})}$, which is then propagated by Theorem~\ref{thm:eform} as $A = \sum_t \tilde p_t e^{u_t}$. Crucially, $V_t$ is computable from the stored quantization scales and the current query alone: \textbf{no witness needs to be stored}---the price of Tier A's 32 B/tok/head is paid only by black-box schemes.

\textbf{Request-level budget.} We allocate $\delta_{\mathrm{loc}} = \delta_{\mathrm{req}}/(L\cdot H\cdot T)$ over layers, heads and decode steps, so the guarantee is stated at the request level. Because the dependence is $\sqrt{\log(1/\delta)}$, tightening $\delta$ by orders of magnitude is inexpensive---quantified in Sec.~\ref{sec:experiments}.

\textbf{Scope of the probabilistic guarantee: non-adaptive queries.} The sub-Gaussian argument conditions on the query, i.e.\ it treats $q$ as statistically independent of the stored dither. That holds exactly for \emph{non-adaptive} queries: prefill queries over a given text, teacher-forced or otherwise fixed token trajectories, and any evaluation in which the trajectory is not itself a function of the compressed cache. In \emph{free-running} decoding it does not hold: the query at step $t$ is a continuous function of earlier attention outputs and hence of the very residuals being bounded, an adaptive dependence that no union bound repairs. We state this as an explicit assumption rather than hide it, with three consequences. (i) \emph{Tier A is adaptive-safe}: the witness bound of Theorem~\ref{thm:tiera} is a worst-case inequality valid for \emph{every} query, adaptive or not, so the universal meter---and every gating result in Sec.~\ref{sec:gatedeval}, all of which are driven by Tier-A witnesses---carries no adaptivity caveat. (ii) Beyond the fixed-trajectory zero-violation evidence (200-seed request-level counting, the kernel soundness checks), we have now \emph{measured} adaptive violations directly under free-running decoding ($8\times$H200, Qwen2.5-7B, 50 prompts $\times$ 20 dither seeds, prefill 4096, decode 64, $\delta_{\mathrm{req}}{=}10^{-2}$; at every step the true TV of every (layer, head) cell is computed by a parallel exact forward and compared against the runtime certificate): \textbf{0/1000} adaptive requests violated, 0/1000 in the teacher-forced control, over \textbf{100{,}352{,}000} monitored cells in total (numerical tolerance $10^{-9}$; the zero-tolerance residual consists entirely of margins $\le 1.8\times10^{-13}$, i.e.\ double-precision rounding, reported as such). Adaptivity was genuinely realized---\textbf{62.9\%} of free trajectories diverged from the fp16 baseline (median divergence step 22)---and the worst-case TV-to-bound ratio was \textbf{0.536}; rule-of-three puts the request-level violation rate below 0.3\% at 95\% confidence. We state the epistemic status plainly: this is validation, not proof. (iii) A martingale-style adaptive analysis, or a randomness-splitting design in which the certificate consumes dither that the generation path never touches, would restore the formal guarantee for free decoding; both remain open (and the same gap is unaddressed in prior runtime-certificate work).

\textbf{Outlier RoPE pairs.} A small number of frequency \emph{pairs} (selected per block by pair scale energy) are kept in FP16 and excluded from the variance term. This costs $16.1$ B/tok/head, i.e.\ $+6.3\%$ over the FP16 key ($256$ B) and $+12.5\%$ over the INT8 payload ($129$ B), and is the single largest lever on coverage: the ablation shows that without it, sub-Gaussian authorization collapses to nearly zero.

\textbf{Value side and the joint event.} Values carry no RoPE and their error propagates linearly, contributing the deterministic term $\sum_t \tilde p_t \rho^v_t$ of the output bound above. The joint event is $\mathrm{pass}_K \wedge \mathrm{pass}_V$, evaluated per (layer, KV head, position); we report joint coverage rather than inferring it from marginals. On the event $E_K$ the joint output bound is $\|y-\hat y\| \le 2\,\mathrm{TV}_K \max_t\|v_t\| + \sum_t \tilde p_t \rho^v_t$.

\subsection{Saturation, clamping, and a two-sided tightening}
\label{sec:twosided}
Since $\mathrm{TV}\le 1$, the sound meter value is $\min\!\bigl(1,\tfrac12(A^2-1)\bigr)$; raw values above saturation carry no guarantee, but remain an empirically discriminative risk score (the fallback experiment of Sec.~\ref{sec:closedloop} shows a $27.6\times$ error reduction precisely on the cases the raw score selects). This distinction is load bearing for the gating experiments of Sec.~\ref{sec:gatedeval} and is marked there explicitly: \emph{certified} gating operates at $\tau<1$, \emph{risk-ranked} gating at $\tau\ge1$.

The one-sided $e$-form is quadratically punished by a single large $c_t$. A second application of the same reasoning gives a companion bound that is linear in that regime:

\begin{proposition}[Two-sided form]\label{prop:twosided}
Let $A = \sum_t \tilde p_t e^{c_t}$ and $A^- = \sum_t \tilde p_t e^{-c_t}$. If $|\varepsilon_t| \le c_t$ for all $t$, then
\[
\mathrm{TV}(p,\tilde p) \;\le\; \tfrac12\Bigl(\frac{A}{A^-} - \frac{A^-}{A}\Bigr),
\]
and the meter $\min\!\bigl(1,\ \tfrac12(A^2-1),\ \tfrac12(A/A^- - A^-/A)\bigr)$ is sound.
\end{proposition}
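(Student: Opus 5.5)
The plan is to redo the argument of Theorem~\ref{thm:eform} while keeping \emph{both} one-sided controls on the normalizer, instead of folding them into the single constant $A$. Write $p_t = \tilde p_t e^{\varepsilon_t}/Z$ with $Z = \mathbb{E}_{\tilde p}[e^{\varepsilon}]$; flipping the sign convention of $\varepsilon$ if needed only swaps the roles of $e^{\pm\varepsilon}$, and the bound is symmetric under this swap. Set $r_t = p_t/\tilde p_t = e^{\varepsilon_t}/Z$, so that $\mathbb{E}_{\tilde p}[r]=1$ and $\mathrm{TV} = \tfrac12\mathbb{E}_{\tilde p}|r-1|$.

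\textbf{Step 1 (normalizer and ratio bounds).} Since $-c_t \le \varepsilon_t \le c_t$ termwise, we get $A^- \le Z \le A$. Hence
\[
\frac{e^{-c_t}}{A} \;\le\; r_t \;\le\; \frac{e^{c_t}}{A^-}.
\]
Also $c_t \ge |\varepsilon_t| \ge 0$, so $e^{c_t}\ge 1\ge e^{-c_t}$, which gives $A \ge 1 \ge A^-$. No Cauchy--Schwarz step is needed here, in contrast to Theorem~\ref{thm:eform}.

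\textbf{Step 2 (split into positive and negative parts).} Write $|r-1| = (r-1)_+ + (1-r)_+$ and bound each part by the matching side of Step 1:
\[
(r_t-1)_+ \le \Bigl(\frac{e^{c_t}}{A^-}-1\Bigr)_+, \qquad (1-r_t)_+ \le \Bigl(1-\frac{e^{-c_t}}{A}\Bigr)_+.
\]
Step 1 shows that $e^{c_t}/A^- \ge 1$ and $e^{-c_t}/A \le 1$. So both positive parts can be removed, and taking expectation under $\tilde p$ gives
\[
\mathbb{E}_{\tilde p}(r-1)_+ \le \frac{A}{A^-}-1, \qquad \mathbb{E}_{\tilde p}(1-r)_+ \le 1-\frac{A^-}{A}.
\]
Adding the two and halving yields $\mathrm{TV} \le \tfrac12(A/A^- - A^-/A)$.

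\textbf{Step 3 (soundness of the meter).} Theorem~\ref{thm:eform} gives $\mathrm{TV}\le\tfrac12(A^2-1)$, Step 2 gives the two-sided bound, and $\mathrm{TV}\le1$ holds trivially. The minimum of three valid upper bounds is again a valid upper bound, so the meter is sound.

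The only delicate point is the removal of the positive parts in Step 2. It relies on $c_t\ge0$ and on $A^-\le1\le A$, and both hold under the hypothesis $|\varepsilon_t|\le c_t$. It is also worth noting why this form grows only linearly when a single $c_t$ is large: the $e^{c_t}$ contribution enters $A$ only once, through $A/A^-$. This contrasts with the square $A^2$ in Theorem~\ref{thm:eform}, which arises from using $Z\ge 1/A$.
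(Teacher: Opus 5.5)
Your proof is correct and follows essentially the same route as the paper. Both arguments sandwich the normalizer between $A^-$ and $A$ and place $p_t/\tilde p_t$ in $[e^{-c_t}/A,\,e^{c_t}/A^-]$, an interval containing $1$ since $A^-\le 1\le A$; the paper then bounds $|p_t/\tilde p_t-1|$ by the sum of the two one-sided gaps, which is exactly what your positive/negative-part split justifies. As a side remark, you already note $\mathbb{E}_{\tilde p}[r]=1$, so each of your two parts alone equals $\mathrm{TV}$, and your Step~2 therefore gives the sharper $\mathrm{TV}\le 1-A^-/A$, which never exceeds $\tfrac12(A/A^- - A^-/A)$.
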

\begin{proof}
With $Z = \sum_t \tilde p_t e^{-\varepsilon_t}$ we have $p_t/\tilde p_t = e^{-\varepsilon_t}/Z$ and $A^- \le Z \le A$, so $p_t/\tilde p_t \in [\,e^{-c_t}/A,\ e^{c_t}/A^-\,]$, an interval containing $1$ (because $A^- \le 1 \le A$). Hence $|p_t/\tilde p_t - 1| \le (1 - e^{-c_t}/A) + (e^{c_t}/A^- - 1)$, and taking the $\tilde p$-expectation gives $\mathrm{TV} \le \tfrac12\bigl((1 - A^-/A) + (A/A^- - 1)\bigr) = \tfrac12(A/A^- - A^-/A)$. Neither form dominates the other (at $A=2$, $A^-=0.6$ the two-sided form is looser), so the minimum is taken.
\end{proof}

The point is the massive-activation regime: if one token carries $\tilde p$-mass $\eta$ with a huge bound $C$, write $M = \eta\, e^{C} \gg 1$; then $A^2 - 1 \approx M^2 + 2M$ grows \emph{quadratically} in $M$ while $A/A^- - A^-/A \approx M$ grows \emph{linearly}. This is exactly the mechanism that drives the meter to $1.6\times10^{2}$ in Sec.~\ref{sec:e2e} and the regime where the capped det-tanh shape currently wins the traffic comparison (Sec.~\ref{sec:boundshape}). Structurally, $A^-$ is a softmax-weighted average exactly like $A$, so it merges across split-KV partitions under the same LSE weights (the argument of Sec.~\ref{sec:sglang}) and would cost one extra accumulator.

\textbf{Measured: the tightening does not materialize on real activations, and we report this negative result as such.} We evaluated Proposition~\ref{prop:twosided} offline on the six-quantizer harness of Fig.~\ref{fig:f5} ($n{=}4480$ cells per scheme; \texttt{experiments/out/p37\_twosided\_bound.json}), including the per-token retention form $\tfrac12\sum_t \tilde p_t \min\bigl(Ae^{c_t}-1,\ \max(e^{c_t}/A^- - 1,\ 1-e^{-c_t}/A)\bigr)$, which dominates both aggregates. In the informative regime (rtn-8bit) the aggregate two-sided form is $1.7\times$ \emph{looser} at the median and wins on only 1.6--1.7\% of cells; the per-token min coincides with the one-sided bound at the median (shrink $1.00\times$) and lifts coverage at $\tau{=}0.2$ only from 0.445 to 0.446. In the saturated regime the raw score shrinks by up to $26{,}433\times$ (rtn-2bit median; log-domain evaluation, no truncation of $c$) yet remains astronomically above $1$ (median $10^{12.5}$, maximum $10^{266}$), so the reported meter is pinned at saturation either way. The measurement refutes the single-spike picture on real data: massive-activation \emph{channels} make many $c_t$ large simultaneously, so $A^- = \sum_t \tilde p_t e^{-c_t}$ collapses toward $0$ and $A/A^-$ explodes. Numerical soundness of the min-form held on all 26{,}880 cells (0 violations). Consequently we do \emph{not} add the extra accumulator to the kernel; Proposition~\ref{prop:twosided} remains stated because it is a theorem, and applies to workloads whose residual bounds are genuinely single-spiked.

\textbf{Two negative results on bound selection.} (i) Applying a sub-Gaussian tail bound \emph{per token} and taking a union bound over the maximum is counterproductive: the resulting $\kappa = \sqrt{2\log(2SN/\delta)} \approx 6$ (with $N$ the number of per-token events) exactly cancels the $\sqrt{P_j}$ concentration gain (with $P_j$ the attention mass of block $j$). (ii) A Bernstein bound is dominated by its range term (linear in the residual range), which is inflated by massive-value channels; the uniform sub-Gaussian form has no such term and is strictly tighter here. Both mistakes were ours, and both were caught only by measurement.

\section{Experiments}
\label{sec:experiments}
Unless stated otherwise: Qwen2.5-7B~\cite{qwen25}, Mistral-7B~\cite{mistral7b} and Yi-1.5-6B~\cite{yi2024}, three domains (natural text, code, synthetic retrieval), ${\sim}50$ documents, 8k context, online block-wise scaling (no whole-sequence statistics), 8-bit K$+$V, $\tau_K = 0.2$, $\tau_V = 0.05$, $\delta_{\mathrm{req}} = 10^{-2}$, statistics at KV-head \emph{any-of-$G$} granularity (an unadorned $\delta$ always means $\delta_{\mathrm{req}}$).

\textbf{Risk--coverage curve (main result).} Table~\ref{tab:riskcov} reports joint K$+$V step coverage, single run per cell, with the deterministic $\tanh$ bound placed at $\delta = 0$.

\begin{table}[!ht]\centering\small
\caption{Risk--coverage curve: joint K$+$V step coverage vs.\ request-level budget $\delta$ (single run per cell; the deterministic tanh bound is the $\delta{=}0$ reference).}\label{tab:riskcov}
\begin{tabular}{llcccc}
\toprule
Model & Domain & $\delta{=}10^{-4}$ & $\delta{=}10^{-2}$ & $\delta{=}5{\times}10^{-2}$ & $\tanh$ ($\delta{=}0$) \\
\midrule
Qwen2.5-7B & natural & 0.792 & 0.809 & 0.814 & 0.167 \\
Qwen2.5-7B & code    & 0.772 & 0.796 & 0.803 & 0.118 \\
Qwen2.5-7B & needle  & 0.802 & 0.810 & 0.815 & 0.485 \\
Mistral-7B & natural & 0.792 & 0.793 & 0.793 & 0.520 \\
Mistral-7B & code    & 0.756 & 0.760 & 0.760 & 0.372 \\
Mistral-7B & needle  & 0.649 & 0.653 & 0.654 & 0.513 \\
Yi-1.5-6B$^{a}$ & natural & 0.557 & 0.559 & 0.559 & 0.141 \\
Yi-1.5-6B$^{a}$ & code    & 0.539 & 0.544 & 0.545 & 0.121 \\
Yi-1.5-6B$^{a}$ & needle  & 0.645 & 0.648 & 0.650 & 0.417 \\
\bottomrule
\end{tabular}

\smallskip
{\small $^{a}$\,Yi-1.5-6B is measured at 4k context because its \texttt{max\_position\_embeddings} is 4096; the other two families are at 8k.}
\end{table}

\textbf{The Yi row is our only weakening cross-model result, and we report it in full.} Yi's coverage (54.4--64.8\% at $\delta{=}10^{-2}$) is markedly lower than Qwen's (79.6--81.0\%) and Mistral's (65.3--79.3\%). This is \emph{not} explained by the shorter context: the certificate's $\sqrt{\log S}$ dependence means 4k should be \emph{more} favourable than 8k, so the effect is real rather than an artefact of the measurement. The relative gain still holds---page-in rate falls 39.7--48.7\% against tanh, inside the 28.7--77.0\% band of the other two families---and $\delta$ sensitivity is the \emph{lowest} of the three (0.22--0.62\,pp across a $500\times$ range). The third family also yields \textbf{zero violations}, a third independent soundness check. The headline coverage range over the nine model$\times$domain cells is therefore \textbf{54.4--81.0\%} at the single working point $\delta{=}10^{-2}$ (the earlier two-family range 64.9--81.5\% is superseded; we do not quote a cross-$\delta$ envelope). The mechanism is \emph{not} a property of the model but an interaction of block-level scaling with Yi's diffuse channel energy; it is identified experimentally in Sec.~\ref{sec:yianomaly} and disappears under the per-token scaling used in deployment, where Yi is in fact \emph{easier} than Qwen.

Reading: (a) the relative reduction in page-in rate is \textbf{28.7--77.0\%} and holds across all three model families and all three domains; (b) tightening $\delta$ by $500\times$ costs at most 3.2\,pp of coverage (median 0.47\,pp over the nine model~$\times$~domain cells)---the slope of the curve answers directly the objection that the gain is merely bought by relaxing the failure probability; (c) heterogeneity is reported honestly: the deterministic bound fares better on Mistral, and four of the nine cells exceed our 30\% page-in target, all on the two weaker families (Mistral needle 34.7\%; Yi 35.2--45.6\% across its three domains).

\textbf{Certificate-formula head-to-head} (global scaling, identical quantizer and outlier handling): the deterministic bound authorizes 15.14 raw bits with a 44.4\% KV-head page-in rate; ours authorizes 11.71 raw bits (11.98 physical, including outliers) at 22.0\%. Stated precisely: \emph{under an explicit $1\%$ request-level risk budget, the probabilistic certificate halves the page-in rate and authorizes compression 3.43 bits/dim deeper ($15.14 \to 11.71$ raw bits).} The two guarantees are of different strength and we do not claim equivalence.

\textbf{Joint K$+$V} (Qwen2.5-1.5B, frozen early setting). Marginals $\mathrm{pass}_K = 89.1\%$, $\mathrm{pass}_V = 97.1\%$; the measured joint is $86.7\%$, inside the interval $[86.2\%, 89.1\%]$ implied by the marginals. KV-head joint page-in is 27.6\%; the idealized amortized effective width is 10.39 bits, charging page-in steps at full FP16 price.

\textbf{Violations (scope limited).} Across the first two model families (the third family adds its own zero-violation run, reported above), three domains, all $\delta$ levels, both certificates, and 200 independent dither seeds counted at request granularity (single text, 64 scattered query positions, all layers and all heads, 8-bit setting), we observed no violations; the rule-of-three 95\% upper bound is 1.5\%, consistent with $\delta_{\mathrm{req}} = 1\%$. Soundness follows from the theorems and their assumptions; zero observed violations is validation, not proof.

\textbf{Real memory saving (measured, not computed).} Our earlier byte account was \emph{computed}. We now build the real packed store (int8 payload $+$ FP16 scale $+$ int16 index $+$ FP16 outliers, with the dither reconstructed from Philox and therefore stored at zero cost) and measure it with \texttt{torch.cuda.memory\_allocated}: at $S = 32768$, $d = 128$, one KV head, the packed store occupies $4{,}753{,}408$ B versus $8{,}388{,}608$ B for FP16, a \textbf{real $43.3\%$ saving}. As a control, the ``dequantize back to bf16'' simulated setting measures a saving of \textbf{0.0\%}, confirming the red-team's objection. The full chain (packed store $\to$ attention kernel $\to$ certificate) runs end to end: at $\tau = 0.2$ the coverage is $0.792$ and the relative output error $9.0\!\times\!10^{-3}$. The saving is independent of context length ($43.3\%$ at 8k, 32k and 128k alike).

\textbf{Ablations} appear where they are used: the outlier-pair count $m$ (Secs.~\ref{sec:designspace}, \ref{sec:bypass}), scale granularity (Sec.~\ref{sec:storage}), and bound shape (Secs.~\ref{sec:twosided}, \ref{sec:boundshape}); the deterministic witness-storing variant \texttt{cert\_D} is subsumed by Tier A.

\textbf{Limitations.} The certificate is a per-layer, per-head \emph{local} guarantee (the value-side bound is deterministic given $\tilde p$, so the joint $\mathrm{pass}_K \wedge \mathrm{pass}_V$ event consumes no additional probabilistic budget beyond the K side); the Tier-B probabilistic guarantee is stated for non-adaptive queries only (Sec.~\ref{sec:tierb})---free-running decoding is outside the theorems, and Tier A is the adaptive-safe tier; certificate aggregation under multi-GPU / tensor parallelism is unverified; model coverage is three 6--7B-class GQA families for the risk--coverage evidence (systems experiments are Qwen-only). \emph{Limitations now lifted}: ``no real packed cache'' is lifted by the measurement above; ``PCIe traffic is a simulated bill'' is lifted by Sec.~\ref{sec:closedloop} (real page-in loop) and Sec.~\ref{sec:blockpage} (block-level paging, $128\times$ less traffic); ``paired significance testing pending'' is lifted by Sec.~\ref{sec:ruler}.

\subsection{Long chain-of-thought workloads (model self-generated, not human-written text)}
\label{sec:cot}
A reasonable objection is that coverage was measured on human-written documents, whereas the long CoT of a reasoning model has a different attention structure (self-generated, repetitive patterns, possibly sharper distributions), so the certificate might collapse there. We generated 4 real long CoT traces with DeepSeek-R1-Distill-Qwen-7B~\cite{deepseekr12025} (1724--4096 tokens, 12{,}432 tokens in total) and measured:

\begin{table}[!ht]\centering\small
\caption{Coverage on model-generated long CoT (DeepSeek-R1-Distill-Qwen-7B; $n{=}4$, a no-collapse check, not a significance result).}\label{tab:cot}
\begin{tabular}{ccccc}
\toprule
Sample & Generated length & Coverage (query head) & Coverage (KV head) & Attention entropy \\
\midrule
0 & 4096 & 0.857 & 0.800 & 3.79 \\
1 & 2617 & 0.857 & 0.800 & 3.66 \\
2 & 1724 & 0.857 & 0.800 & 3.50 \\
3 & 3995 & 0.858 & 0.798 & 3.66 \\
\bottomrule
\end{tabular}
\end{table}

Mean KV-head coverage is \textbf{0.800}, on par with the 0.792 full-chain coverage of Sec.~\ref{sec:experiments} on ordinary long text, and \textbf{no collapse was observed}; the variance across the four samples is very small. Note $n = 4$: this is a ``no collapse observed'' statement, not a significance result.

\subsection{Long-context scaling to 128k}
\label{sec:s128k}
The introduction motivates with long context, so we measure how coverage itself scales with $S$, on the one model that fits full-activation capture at 128k on our GPU (Qwen2.5-1.5B-Instruct; two concatenated domain streams, natural and needle; the Sec.~\ref{sec:experiments} configuration at $\delta_{\mathrm{req}}{=}10^{-2}$; the model's own rotary $\cos/\sin$ are hooked, so YaRN and non-YaRN runs are exactly faithful to the forward pass). Joint K$+$V step coverage: \textbf{63.7--64.5\%} at 8k, \textbf{57.3--60.0\%} at 32k, and \textbf{26.6--41.3\%} at 128k, the last under the vendor's official YaRN~\cite{yarn2024} factor-4 long-context configuration since the native window is 32k. Two readings, both honest. (i) The reviewer-friendly prediction ``coverage will hold because the radius grows as $\sqrt{\log S}$'' is \emph{not} confirmed: the radius grows only ${\sim}5\%$ from 32k to 128k, yet coverage drops by up to half---the decline is a property of long-context attention statistics under YaRN, not of the certificate mathematics, and we report it as such. (ii) The \emph{relative} picture strengthens: the deterministic tanh reference collapses to at most \textbf{0.2\%} coverage at 128k, while the probabilistic certificate still authorizes 26.6--41.3\% of steps. Zero violations at all three lengths (a fourth independent soundness check).

\textbf{On a native long-context model the decline does not reproduce (same protocol, $8\\times$H200).} Re-running with Llama-3.1-8B-Instruct, whose native window is 131{,}072 (no YaRN extrapolation of any kind), joint coverage on the needle domain is \textbf{77.0 / 77.1 / 77.0\%} at 8k/32k/128k and 71.7/73.8/71.4\% on natural---\emph{flat in sequence length}, with zero violations at every length, while the tanh reference still collapses (natural $0.399 \to 0.137$). Read together, the two experiments localize the cause: the 128k decline above is a property of \emph{YaRN-extrapolated} attention statistics (small model, extended window), not of long context per se; native long-context models show no such decline.

\section{System Implementation and On-Device Results}
\label{sec:system}
This section is organized in four parts: architecture (\ref{sec:arch}), implementation (\ref{sec:impl}), evaluation (\ref{sec:eval}), and analysis (\ref{sec:analysis}). The loop being implemented is:

\medskip
\noindent\fbox{\parbox{0.965\linewidth}{\small
\textbf{The meter loop} (one decode step, one layer, one KV head).\\[2pt]
\emph{Write path} (once, at cache write): quantize $k \to \hat k$ with any scheme; Tier A stores the witness $w_{t,b} = \|r_{t,b}\|$ (32 B/tok/head); Tier B (dithered INT8) stores payload and scales only.\\[2pt]
\emph{Read path} (fused into decode attention, shared data load): accumulate $A = \sum_t \tilde p_t e^{c_t}$ with $c_t$ from Theorem~\ref{thm:tiera} (Tier A) or the sub-Gaussian radius $u_t$ (Tier B); merge across KV splits under the LSE weights; meter $= \min\!\bigl(1, \tfrac12(A^2-1)\bigr)$.\\[2pt]
\emph{Gate} (per (layer, head) or per request): if the meter (or, above saturation, the raw risk score) exceeds $\tau$, page in the top-contributing blocks (\ref{sec:blockpage}) or the whole head from the exact backing store, recompute, write back, zero the witness; otherwise serve the compressed path.}}
\medskip

\begin{figure}[t]\centering
\includegraphics[width=0.98\linewidth]{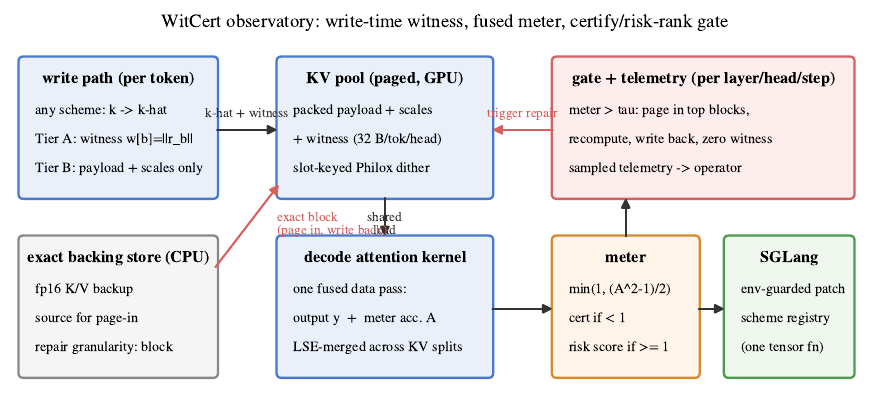}
\caption{The WitCert observatory. Any scheme's residual leaves a witness at write time; the decode kernel computes output and meter in one fused data pass (LSE-merged across KV splits); the gate repairs at block granularity from the exact backing store and zeroes the repaired witnesses; SGLang integration is a bit-identical-when-off env-guarded patch.}\label{fig:f7}
\end{figure}

\subsection{Architecture}
\label{sec:arch}
The meter loop above is the design; Fig.~\ref{fig:f7} shows where it lives in a serving engine. The rest of this section is organized as implementation (\ref{sec:impl}), evaluation (\ref{sec:eval}) and analysis (\ref{sec:analysis}).

\subsubsection{Why the meter is structurally cheap by design}
\label{sec:cheap}
Scales are block wise ($\mathrm{BLK} = 256$) and the tile width divides the block width, so within a tile $\varsigma^2$ is \emph{token independent}; the certificate accumulator therefore collapses from a $[\mathrm{BLOCK\_M}, \mathrm{BLOCK\_N}]$ matmul to one scalar per query times the softmax row sum. The dither PRNG is likewise shared by the $\mathrm{BLOCK\_M}$ queries in a tile, amortizing its cost by a factor $\mathrm{BLOCK\_M}$. (The read side uses the same five-tuple Philox as the write side; an early version used \texttt{tl.rand}, was therefore not the same random stream as the write side, and was flagged as a violation by the contract test. It has been fixed.)

\textbf{The marginal cost of the certificate itself, measured in two places.} (i) In our research-grade kernel it already falls into the noise, $-1.3\%$ (the latency difference between certificate on and off; see \texttt{docs/active\_block\_policy.md}). (ii) Inside the SGLang 0.5.9 production decode kernel it is $+11.9\%$ (Sec.~\ref{sec:sglang}). The two are not in conflict: our kernel's latency is dominated by dither reconstruction ($77.9\%$ of it), so the certificate is a small quantity in its shadow, whereas the SGLang kernel has no dither reconstruction and is tightly tuned for occupancy, so one extra accumulator shows up directly as $+11.9\%$. \textbf{We adopt the latter as the external figure}---it is the number a production system will see.

\subsection{Implementation}
\label{sec:impl}
\subsubsection{Research kernels and the proof--kernel contract}
\label{sec:kernels}
The write kernel implements $\hat x = s\,(\mathrm{round}(x/s+\xi)-\xi)$ with the dither $\xi$ drawn by Philox4x32-10~\cite{philox2011} from a five-tuple counter (request, layer, KV head, token, channel), so the random stream is invariant to batch scheduling and CUDA-graph replay and can be reconstructed exactly at read time. The scale $s = \mathrm{amax}_{\mathrm{blk}}/126.5$ (i.e.\ $Q-\tfrac12$) guarantees \emph{strictly no clipping} on non-outlier channels, which is the premise of the sub-Gaussian assumption. Two implementation facts turned out to be load bearing. First, quantization must use the \emph{stored} FP16 scale: quantizing with an fp32 scale while storing FP16 breaks $|\varepsilon| \le s/2$ (we measured 1639 violations) and with it the sub-Gaussian assumption---a concrete case of a kernel detail deciding whether a theorem holds. Second, Triton's \texttt{uint32} multiply/broadcast semantics diverged from our reference at round two of Philox; we moved both sides to int64 with explicit masking, trading throughput for semantic certainty. The read kernel performs INT8 dequantization, Philox dither reconstruction and certificate accumulation in a single kernel, sharing one data load with attention.

\textbf{Acceptance (nine items, 8/9 complete).} Elementwise agreement between the CPU reference and the GPU kernel (int8 disagreement $1/262{,}144$, asserted to fall on a 0.5 rounding tie, where both rounding directions satisfy $|\mathrm{err}| \le s/2$; attention output $\max|\Delta o| = 2.9\!\times\!10^{-6}$, certificate $\max|\Delta| = 8.9\!\times\!10^{-8}$); five-tuple stream uniqueness; CUDA-graph replay \emph{bit identical} to eager execution with five interleaved replays agreeing exactly (the seed is unaffected by scheduling), and a certificate produced at \emph{every} graph-captured decode step---this closes the gap left by a Python-level probe, which CUDA graph rejects precisely because of its host synchronization; zero clipping on non-outlier channels; measured kernel error fed directly to the certificate with zero violations; the physical byte account exact to the byte; overhead reported as three separate items; and the quad report. The tiled kernel is covered by a separate contract test ($\Delta o = 9.8\!\times\!10^{-6}$, $\Delta\mathrm{cert} = 6.0\!\times\!10^{-7}$, zero violations). Not complete: production-grade kernel optimization (warp specialization, asynchronous prefetch).

\subsubsection{The SGLang decode kernel}
\label{sec:sglang}
The kernels above are all our own research-grade implementations. To answer ``can the certificate go into a real inference engine'', we added the certificate accumulator to the Triton decode attention of \textbf{SGLang 0.5.9}~\cite{sglang2024} (the grouped two-stage kernel of \texttt{decode\_attention.py}), a \textbf{500-line} diff, in \texttt{integration/witcert\_sglang.patch}.

\textbf{Key structure: the certificate shares the split-KV reduction with the output.} SGLang uses a flash-decoding-style~\cite{flashdecoding2023} two-stage structure (descended from FlashAttention~\cite{flashattention2022}): stage 1 emits per-split local $(\mathrm{acc}/e_{\mathrm{sum}}, \mathrm{lse})$, and stage 2 merges them with LSE weights. But the certificate $A = \sum_t \tilde p_t e^{u_t}$ \emph{is itself a softmax-weighted average}, so the per-split $A_s = \sum_{t \in s} \tilde p_t e^{u_t}/e_{\mathrm{sum},s}$, merged under the same LSE weights, yields \emph{exactly} the global $A$---no extra synchronization, no second scan. This is not a coincidence: the certificate and the attention output are both linear functionals of $\tilde p$.

To be compatible with token-granularity paging, this integration uses \textbf{one scalar scale per token} (rather than the per-block, per-channel scale of the main text). The certificate formula is unchanged, and since $u_t = \alpha_h\, s_t$ is \emph{linear} in the token scale $s_t$ (with $\alpha_h$ a per-(query, head) constant), the square root can be hoisted out of the loop entirely.

\begin{table}[!ht]\centering\small\setlength{\tabcolsep}{4pt}
\caption{Acceptance of the SGLang decode-kernel integration.}\label{tab:sglaccept}
\begin{tabular}{@{}p{0.56\linewidth}p{0.38\linewidth}@{}}
\toprule
Acceptance item & Result \\
\midrule
Certificate off vs upstream kernel & \textbf{bit identical} ($\max|\Delta| = 0$) $\Rightarrow$ zero intrusion \\
Certificate vs torch reference (shuffled paged slots, 8-way split) & max relative error $\mathbf{5.06\!\times\!10^{-6}}$ \\
Overhead & $\mathbf{+11.9\%}$ \\
\bottomrule
\end{tabular}
\end{table}

\textbf{Attributing the overhead (discriminative experiment).} Keeping the accumulator but removing the scale load and the exponential still costs $+11.77\%$; inlining the scale into the K row (saving one memory transaction) reduces it by only 0.6\%; and aligning quantization blocks to pages so that $u$ is constant within a tile (reducing the number of exponentials from $\mathrm{BLOCK\_H}\times\mathrm{BLOCK\_N}$ to $\mathrm{BLOCK\_H}$) yields no gain. \textbf{So this 11.9\% is almost entirely the register/occupancy cost of ``one more accumulator'', and has nothing to do with the certificate's arithmetic.} This does not contradict the noise-level ($-1.3\%$) certificate cost of our own kernel in Sec.~\ref{sec:cheap}: that kernel's latency is dominated by dither reconstruction (77.9\% of it), whereas the SGLang kernel has no dither reconstruction and is tightly tuned for occupancy, so any new accumulator shows up immediately. \textbf{The external figure should be $+11.9\%$}---that is the number a production system will see.

At this stage only certificate \emph{computation} is integrated; storage is the subject of the next subsection.

\subsubsection{Storage side: a paging-compatible certified KV cache}
\label{sec:storage}

Computing the certificate is not enough: its soundness requires \emph{subtractive dithering}, and SGLang's existing fp8 KV cache uses deterministic rounding, which violates the theorem's hypothesis. This is exactly what the proof--kernel contract is for---one cannot bolt a certificate onto a kernel that does not produce the assumed error distribution. A certified deployment must therefore bring its own storage.

\textbf{The scale granularity has to be re-chosen.} The body of this paper uses per-256-token-block, per-channel scales. SGLang's pool is paged at \emph{token-slot} granularity (in the PagedAttention style~\cite{vllm2023}): slots are assigned arbitrarily by the allocator, consecutive tokens of a request need not be contiguous, and during decoding a page fills \emph{incrementally}, so when token $t$ is written the later tokens do not exist and no block amax can be computed. We therefore scale \emph{per token, per group of $G_c$ channels}---a granularity that depends only on the current token and is fully decoupled from slot assignment. On Qwen2.5-7B real activations (8k, 8-bit K$+$V, no outlier bypass):

\begin{table}[!ht]\centering\footnotesize\setlength{\tabcolsep}{4pt}
\caption{Scale granularity on real activations (Qwen2.5-7B, 8k, 8-bit K$+$V, no bypass; real-V output error; single run).}\label{tab:granularity}
\begin{tabular}{lccccc}
\toprule
Scheme & Rel.\ output error & vs.\ ours & K$+$V B/tok/head & Saved & Paging-safe \\
\midrule
block $\times$ channel (paper body) & $2.449\!\times\!10^{-2}$ & $1.00\times$ & 258 & 49.6\% & no \\
token $\times$ 16-channel groups & $\mathbf{1.821\!\times\!10^{-2}}$ & $\mathbf{0.74\times}$ & 288 & 43.8\% & yes \\
\textbf{token $\times$ 32-channel groups} & $2.131\!\times\!10^{-2}$ & $\mathbf{0.87\times}$ & 272 & 46.9\% & yes \\
token scalar & $3.720\!\times\!10^{-2}$ & $1.52\times$ & 260 & 49.2\% & yes \\
\bottomrule
\end{tabular}
\end{table}

\textbf{This is not a compromise---it is more accurate.} Per-token scaling tracks the very-large-magnitude tokens (the massive-activation effect discussed above), which a 256-token block scale smooths away. Paging-safety and quality point the same direction here. We take $G_c{=}32$ as the default.

\textbf{The dither counter uses the slot, not the sequence position.} The Philox five-tuple is (layer, kv\_head, \textbf{slot}, channel). Slots are stable in the cache---paging, prefix sharing and retraction never change the contents of a written slot, so the read side can always reconstruct the dither $\xi$ used at write time; sequence positions are not stable (under prefix sharing the same token appears at different positions in different requests).

Validation (RTX 4090, $S{=}4096$, batch 2, 28 query / 4 KV heads, 8-way split): Triton dither vs.\ the torch reference is \textbf{bit-identical}; residuals satisfy $|\mathrm{err}| \le s/2$ with \textbf{zero violations} (max ratio 1.0000); in-kernel dequantisation matches the reference to $6.43\!\times\!10^{-4}$ (the fp16 baseline itself is $2.17\!\times\!10^{-4}$); the in-kernel certificate matches to $4.59\!\times\!10^{-4}$; \textbf{soundness holds with 0/56 violations} (true TV $10^{-3}$--$10^{-2}$ against certificates 0.119--0.170, median tightness $22.2\times$); coverage at $\tau{=}0.2$ is 1.000; end-to-end quality cost $1.62\!\times\!10^{-2}$; latency fp16 $\to$ quantised $5.47\times$ $\to$ quantised${+}$certificate $5.49\times$; measured memory 136 B/tok/head \emph{per side} (K$+$V total 272 B, matching the $G_c{=}32$ row of Table~\ref{tab:granularity}), \textbf{46.9\% saved}.

\textbf{Structural finding: the certificate's marginal cost collapses once storage is quantised.} The same certificate costs \textbf{+11.9\%} on the fp16 storage path but only \textbf{+0.35\%} on the quantised path, because $\varsigma^2_{h,t} = (\mathrm{sm\_scale}^2/12) \sum_c q_{h,c}^2\, s_{c,t}^2$ is exactly one $[\mathrm{BLOCK\_H}, d] \times [d, \mathrm{BLOCK\_N}]$ \texttt{tl.dot} whose $s$ \emph{was already loaded for dequantisation}---the certificate rides along with zero extra memory traffic. This resolves the apparent tension with the ``nearly free'' claim earlier: the +11.9\% was measured on a kernel that needs no scales at all. \textbf{The certificate's cost should not be priced separately; it and the quantisation are two uses of the same load.}

\subsubsection{End-to-end in the serving process}
\label{sec:serving-e2e}

\textbf{Source patching is required; monkeypatching does not work.} SGLang's scheduler runs in a spawned subprocess, so parent-process patches never reach it (the subprocess simply exits with EOFError). Our patch is \textbf{327 lines} across three sites (\texttt{integration/\allowbreak witcert\_\allowbreak sglang\_\allowbreak serving.patch}), all guarded by an environment variable so that the disabled path is bit-identical to upstream: the pool constructor, \texttt{get\_\allowbreak cell\_\allowbreak size\_\allowbreak per\_\allowbreak token},
and \texttt{forward\_\allowbreak decode}.

\textbf{The middle site is easy to miss, and missing it zeroes the benefit.} Our first version changed only the pool and the backend; KV capacity did not move, because SGLang still sized it from \texttt{kv\_cache\_dtype} and the saved memory sat idle. \emph{Byte savings become serving capacity only if the scheduler is told about them.}

\subsubsection{Fixed-channel outlier bypass: two steps that must both be taken}
\label{sec:bypass}

The body of the paper selects $m$ outlier RoPE pairs per block. Under token-granular paging there are no blocks, but measurement shows a \emph{per-(layer, kv-head) fixed channel set} suffices: the massive-activation RoPE pairs are highly stable across tokens---the top-8 pairs chosen from the first half of a sequence match those from the second half with Jaccard \textbf{0.940} (Qwen2.5-7B) and \textbf{0.880} (Yi-1.5-6B), minimum 0.750 / 0.500. A fixed set removes the need for per-token indices: only the $2m$ fp16 values are stored (16 B at $m{=}4$).

\textbf{Bypassing alone is not enough---the bypassed channels must also be excluded from the group amax.} Otherwise the other 31 channels of the group still pay for the very large channel that is already bypassed, the scale does not drop, and the certificate barely improves. A controlled experiment (4 injected massive RoPE pairs):

\begin{table}[!ht]\centering\small\setlength{\tabcolsep}{4pt}
\caption{Bypass alone is not enough: the bypassed channels must also leave the group amax (controlled experiment, 4 injected massive RoPE pairs).}\label{tab:amax}
\begin{tabular}{lccc}
\toprule
Treatment & Median meter & Max meter & Coverage ($\tau{=}0.2$) \\
\midrule
No bypass & 1.0311 & 2.378 & 0.000 \\
Bypass, amax unchanged & 0.9608 & 2.177 ($-8\%$) & 0.000 \\
\textbf{Bypass + excluded from amax} & \textbf{0.0620} & \textbf{0.0731 ($32\times$ lower)} & \textbf{1.000} \\
\bottomrule
\end{tabular}
\end{table}

Coverage-versus-budget on real activations ($S{=}4096$, $\tau{=}0.2$, amax exclusion applied):

\begin{table}[!ht]\centering\small\setlength{\tabcolsep}{4pt}
\caption{Coverage vs.\ outlier budget on real activations ($S{=}4096$, $\tau{=}0.2$, amax exclusion applied; single run).}\label{tab:bypassbudget}
\begin{tabular}{lcccc}
\toprule
$m$ & Qwen2.5-7B & Yi-1.5-6B & B/tok/head & Saved \\
\midrule
0 & 0.525 & 0.907 & 136 & 46.9\% \\
2 & 0.818 & 0.999 & 144 & 43.8\% \\
\textbf{4} & 0.843 & \textbf{1.000} & 152 & 40.6\% \\
8 & 0.929 & 1.000 & 168 & 34.4\% \\
16 & \textbf{0.980} & 1.000 & 200 & 21.9\% \\
\bottomrule
\end{tabular}
\end{table}

\textbf{Prefix reuse (radix cache) is supported.} The extend/prefill kernel received the same in-kernel dequantisation treatment (\textbf{155-line} diff): bit-identical to upstream when disabled, and matching a ``dequantise then run upstream'' reference to $9.34\!\times\!10^{-6}$ when enabled. Across requests sharing a long prefix, outputs track the fp16 baseline closely (2 of 4 answers bit-identical, the rest semantically equivalent).

\textbf{The write path must also be a kernel.} Our first working version reached only 63.7 tok/s because quantisation on write was still in torch: ten rounds of int64 Philox tensor ops, inside a 28-layer per-step loop, generate a great many kernel launches. Moving it to Triton (\textbf{bit-identical} to the torch reference, $82\times$ faster per call) raised this to 575.9 (single development-milestone run; the final median-of-3 figure is Table~\ref{tab:serving}'s 511.8). \emph{Offline validation of the read path alone misses this bottleneck entirely.}

\textbf{Certificate telemetry in production.} Each decode step emits certificates, aggregated on the GPU and flushed by sampling. One full-sample run (1.5B, no bypass, request-level allocation): 249{,}312 (layer, query-head) certificates, \textbf{coverage 81.2\% at $\tau{=}0.2$}, with a few heads far above threshold (massive activation). An operator therefore learns directly that 18.8\% of (layer, head, step) triples carry no guarantee at the current threshold---the deployed form of the runtime observability layer proposed in the introduction.

The telemetry itself has three traps we hit in practice (single diagnostic measurements, used for localisation rather than quantification): a per-layer \texttt{.item()} sync dropped throughput from ${\sim}560$ to 409.9; a \texttt{.float().contiguous()} copy of the whole scale buffer inside the certificate branch brought it to 423.7; and even with both fixed, per-layer reductions still cost enough that we moved to sampling (stride 32). With those fixed, an isolated measurement puts the telemetry at 287.6 (off) vs.\ 284.1 (on) on 7B---\textbf{1.2\%}, meaning the 6.4\%/16.2\% certificate cost above is essentially all kernel, not telemetry. \emph{Observation code can easily cost more than what it observes and must be an order of magnitude cheaper than what it measures.}

When the certificate overflows fp32 its value is $+\infty$. This is \emph{still sound}---infinity is a valid upper bound, meaning ``no guarantee for this head''---and is counted as over-threshold; means and maxima are taken over finite values only and the overflow fraction is reported separately.

\textbf{Boundary (stated plainly).} Covered: both the decode and extend/prefill paths, \texttt{page\_size} 1 and 16, model scales 1.5B and 7B, radix cache on and off. \textbf{Not covered}: the FlashInfer~\cite{flashinfer2025} backend (this is a Triton-side contribution; FlashInfer is a separate CUDA kernel family needing its own implementation); multi-GPU / tensor parallelism is unverified.

\subsection{Evaluation}
\label{sec:eval}
\subsubsection{Quad report and same-framework comparison ($S = 32768$, $n_q = 24$, $\tau = 0.2$, $\delta_{\mathrm{req}} = 10^{-2}$; synthetic K)}
\label{sec:quad}
Table~\ref{tab:quad} compares four methods on four axes.

\begin{table}[!ht]\centering\footnotesize\setlength{\tabcolsep}{4pt}
\caption{Quad report, synthetic K, single run ($S{=}32768$, $n_q{=}24$, $\tau{=}0.2$). RTN-INT8$^{a}$ is the base form of KVQuant/KIVI.}\label{tab:quad}
\begin{tabular}{lccccc}
\toprule
Method & Cert.\ coverage & Fallback & Latency & B/tok/head & Rel.\ output error \\
\midrule
fp16 & --- & --- & 1.593 ms ($1.00\times$) & 256 & $2.7\!\times\!10^{-4}$ \\
RTN-INT8$^{a}$ & none & --- & 1.575 ms ($0.99\times$) & 129 ($-49.6\%$) & $8.5\!\times\!10^{-3}$ \\
RTN-INT4 & none & --- & 1.508 ms ($0.95\times$) & 65 ($-74.6\%$) & $1.6\!\times\!10^{-1}$ \\
\textbf{WitCert-INT8} & \textbf{79.2\%} & 75\% (any-of-6) & 8.627 ms ($5.41\times$) & 145.1 ($-43.3\%$) & $\mathbf{9.0\!\times\!10^{-3}}$ \\
\bottomrule
\end{tabular}
\end{table}

Three readings. (a) \emph{No measurable quality cost from the certificate}: WitCert and RTN-INT8 are at the same error level ($9.0\!\times\!10^{-3}$ vs $8.5\!\times\!10^{-3}$), and the extra memory is the $16.1$ B/tok/head FP16 outlier bypass ($+12.5\%$ over the $129$ B of RTN-INT8, $+6.3\%$ over the $256$ B FP16 key), bought in exchange for a runtime TV upper bound. (b) The $5.41\times$ latency comes from the contract-compliant dequantization path (two non-exclusive ablation deltas: Philox $77.9\%$, IEEE precision $30.4\%$; already improved $2.7\times$ by four-output reuse, with a further $1.48\times$ available from a TF32 knob), while \textbf{the marginal cost of the certificate itself is within the noise ($-1.3\%$)}; the near-zero overhead of the RTN path shows that memory traffic is already saturated and the dither PRNG is pure added compute, so the path to production is clear. (c) RTN-INT4 saves $74.6\%$ but incurs an error of $1.6\!\times\!10^{-1}$ (${\sim}19\times$ worse than RTN-INT8) while the system remains entirely unaware---which is exactly the risk visibility WitCert can supply to any compression scheme.

\subsubsection{Quad report on real activations, and 32k (replacing synthetic K)}
\label{sec:realact}
The quad report of Sec.~\ref{sec:quad} uses synthetic (random Gaussian) K; this section redoes the same measurement on \textbf{post-RoPE K activations of Qwen2.5-7B over real documents in three domains} (every fourth layer, all GQA query heads). $S = 8192$, 7 documents across three domains, $\tau = 0.2$, $\delta_{\mathrm{req}} = 10^{-2}$:

\begin{table}[!ht]\centering\small
\caption{Quad report on real Qwen2.5-7B activations ($S{=}8192$; error column is the K-side proxy; single run).}\label{tab:realact}
\begin{tabular}{lcccc}
\toprule
Method & Cert.\ coverage & Fallback (any-of-6) & B/tok/head & Rel.\ output error \\
\midrule
fp16 & --- & --- & 256.0 & --- \\
RTN-INT8 & none & --- & 129.0 ($-49.6\%$) & $4.27\!\times\!10^{-2}$ \\
RTN-INT4 & none & --- & 65.0 ($-74.6\%$) & $4.64\!\times\!10^{-1}$ \\
\textbf{WitCert-INT8} & \textbf{80.6\%} & \textbf{23.0\%} & 145.1 ($-43.3\%$) & $\mathbf{2.38\!\times\!10^{-2}}$ \\
\bottomrule
\end{tabular}
\end{table}

The fallback column is \emph{measured} any-of-6 (the harness groups six query heads per KV head), not inferred from per-head coverage: on synthetic K the per-head exceedances are nearly independent ($75\% \approx 1-0.792^6$, the quad report), whereas on real activations they are strongly correlated within a KV group ($23.0\% \ll 1-0.806^6 = 72.6\%$)---correlation that itself reflects the channel-concentrated structure of real keys.

\textbf{Dither is quality neutral ($2\times2$ ablation).} Measured with \textbf{real V} on real attention output (quantizer $\times$ outlier, Qwen2.5-7B, 8k, 7 documents across three domains, 8-bit K$+$V); an earlier ``$1.7$--$1.8\times$ better than RTN'' claim rested on a K-side proxy and is retracted (Appendix~\ref{app:evolution}):

\begin{table}[!ht]\centering\small
\caption{$2\times2$ ablation with real V: dither is quality neutral; the quality gain belongs to the outlier bypass.}\label{tab:ablation22}
\begin{tabular}{lcc}
\toprule
Configuration & Rel.\ output error & K$+$V B/tok/head \\
\midrule
RTN & $2.751\!\times\!10^{-2}$ & 258.0 \\
RTN $+$ outlier & $1.995\!\times\!10^{-2}$ & 290.1 \\
dither & $2.717\!\times\!10^{-2}$ & 258.0 \\
dither $+$ outlier & $1.951\!\times\!10^{-2}$ & 290.1 \\
\bottomrule
\end{tabular}
\end{table}

In this ablation \emph{both} K and V carry outlier pairs (hence 290.1 B/tok/head), unlike the RULER configuration where only K does. Decomposition: \textbf{the gain from dither itself is only $1.01\times$, i.e.\ quality neutral}; the entire $1.38\times$ gain comes from the outlier bypass, at a cost of $+12.4\%$ memory. So the correct conclusion is not ``guarantee and quality move together'', but rather: \textbf{subtractive dither is quality neutral relative to same-bit-width RTN, and it is a necessary condition for the probabilistic certificate $\Rightarrow$ the probabilistic certificate is obtained at zero accuracy cost}; the quality gain comes from the outlier bypass, is unrelated to the certificate, and is an independently selectable knob. This conclusion is weaker than the original, but defensible.

\textbf{The 32k long-context setting} (same configuration, 4 documents across three domains):

\begin{table}[!ht]\centering\small
\caption{The 32k long-context setting (same configuration as Table~\ref{tab:realact}; K-side proxy error).}\label{tab:realact32k}
\begin{tabular}{lcccc}
\toprule
Method & Cert.\ coverage & Fallback & B/tok/head & Rel.\ output error \\
\midrule
RTN-INT8 & none & --- & 129.0 ($-49.6\%$) & $3.93\!\times\!10^{-2}$ \\
RTN-INT4 & none & --- & 65.0 ($-74.6\%$) & $4.33\!\times\!10^{-1}$ \\
\textbf{WitCert-INT8} & \textbf{79.3\%} & \textbf{24.5\%} & 145.1 ($-43.3\%$) & $\mathbf{2.31\!\times\!10^{-2}}$ \\
\bottomrule
\end{tabular}
\end{table}

From 8k to 32k ($4\times$ the context), coverage moves $80.6\% \to 79.3\%$ and fallback $23.0\% \to 24.5\%$---\textbf{under native RoPE up to 32k, long-context degradation is very slow}, a structural advantage of the probabilistic certificate over deterministic bounds whose $\max$-type structure worsens as $S$ grows. (This observation must not be extrapolated: the 128k YaRN measurement of Sec.~\ref{sec:s128k} shows the decline steepens beyond the native window for activation-statistical reasons, even though the radius itself grows only ${\sim}5\%$.)

\emph{Note}: as in Sec.~\ref{sec:quad}, the ``relative output error'' column of Tables~\ref{tab:realact} and \ref{tab:realact32k} is a \textbf{K-side proxy} ($y = p(K)\cdot K$) and is only meaningful for within-table comparison. Do not use it to compute ``how many times lower WitCert is than RTN''---that is exactly the scope error retracted above. The real attention-output error is given by the $2\times2$ ablation (dither is quality neutral, $1.01\times$).

\subsubsection{End-to-end benchmark comparison (RULER-4k, Qwen2.5-7B, 1300 samples)}
\label{sec:ruler}
The quad report measures four kernel-level axes; this section answers ``on a real model and a real task, what happens to quality?'' Each method is evaluated at its own customary budget, stated in its own units (bytes for quantization, tokens for eviction; the two are \emph{not} directly comparable---see note~$b$), and we compare against production-grade press implementations (NVIDIA kvpress~\cite{kvpress2024}; SnapKV~\cite{snapkv2024}, KnormPress~\cite{knormpress2024}); results are in Table~\ref{tab:ruler}.

\begin{table}[!ht]\centering\footnotesize\setlength{\tabcolsep}{4pt}
\caption{RULER-4k, Qwen2.5-7B, 1300 samples, single run. Budgets are stated in each method's own unit (bytes vs.\ tokens) and are not comparable across rows; see notes $a$/$b$ in the text.}\label{tab:ruler}
\begin{tabular}{lccccc}
\toprule
Method & Budget cut (unit) & Mean & Needle class & \texttt{qa\_1} / \texttt{qa\_2} & Guar. \\
\midrule
Full FP16 & 0\% & 94.30 & all green & 88.24 / 52.22 & --- \\
\textbf{WitCert quant.\ config}$^{a}$ & bytes \textbf{46.5\%} & \textbf{94.42} & all green & 90.20 / 52.22 & no$^{a}$ \\
Expected Attention$^{b}$ & tokens 40.1\% & 93.88 & all green & 85.29 / 48.89 & no \\
Expected Attention & tokens 25\% & 94.09 & all green & 86.27 / 50.00 & no \\
SnapKV & tokens 25\% & 55.02 & \textbf{collapse (1.1--65)} & 89.22 / 48.89 & no \\
KnormPress & tokens 25\% & 39.14 & \textbf{collapse} & 63.73 / 28.89 & no \\
\bottomrule
\end{tabular}
\end{table}

The WitCert row uses the deployment-grade quantization configuration (online block-wise scaling $+$ in-block outlier pairs, 8-bit K/V); Expected Attention~\cite{expectedattention2025} (EA) is the strongest method on the kvpress leaderboard. Per task, EA at its 40.1\% token budget loses $-2.95$ on \texttt{qa\_1}, $-3.33$ on \texttt{qa\_2} and $-0.92$ on \texttt{niah\_single\_2} relative to full FP16 (mean $-0.42$), with all remaining needle tasks green (\texttt{multikey\_3} changes by $0.00$).

$^{a}$ \textbf{Scope clarification.} This table measures ``the task quality of simulated dither-INT8 plus outlier quantization''; it is \textbf{not} a certified-deployment result. The press used for evaluation applies 8 bits uniformly to all layers, does not read the certificate's per-layer table, does not use a packed cache, computes no online certificate and performs no fallback; tensors are dequantized back to floating point, so the real runtime memory saving is zero. The byte account is K 145.1\,B $+$ V 129\,B $=$ 274.1\,B versus 512\,B for FP16, i.e.\ \textbf{46.5\%}. (An earlier figure of 48.1\% came from converting the key-side profile alone and has been corrected.)

$^{b}$ \textbf{The two budget columns are not the same quantity and must not be juxtaposed as ``same budget''.} WitCert's 46.5\% is a \emph{byte} budget (quantization; the token count is unchanged), whereas Expected Attention \emph{drops tokens}: its configuration \texttt{compression\_ratio=0.401} means discarding 40.1\% of the tokens. Labelling the EA row as 46.5\% too has no basis, and we have restored its true budget. Because the two budgets have different units, this table may only be read as ``the quality each method attains at its own customary budget setting'', never as a strict same-budget comparison.

\textbf{Paired significance test.} Using per-sample predictions (\texttt{predictions.csv}, 1089 paired samples after deduplication by (task, question)), we ran a paired bootstrap ($10^4$ resamples) and McNemar's test between WitCert and EA: the two methods give the \textbf{same verdict on 99.4\% of samples} (McNemar: 5 vs 1), and the 95\% CI of the difference is $[+0.00, +0.83]$ with $p \approx 0.12$---\textbf{not significant}.

\textbf{Scorer scope (important).} That test used our own independently implemented strict-prefix \texttt{string\_match}, under which the absolute scores are \textbf{59.32\% (WitCert) vs 58.95\% (EA)}, which are \textbf{not comparable in absolute value} to the 94.42 / 93.88 produced by the official kvpress scorer (the two scorers differ in strictness). What is tested is the \emph{paired difference on the same set of predictions}, so the ``not significant'' conclusion applies equally to the $+0.53$ of Table~\ref{tab:ruler}; but the reader should not conflate 59.32/58.95 with 94.42/93.88.

Hence the apparent $+0.53$ lead \textbf{cannot be claimed as a statistically significant quality advantage}; the correct statement is ``no quality degradation observed, and the difference from the strongest baseline is within noise''. This does not weaken our thesis---our thesis is the \emph{runtime guarantee}, not a quality lead.

Reading: (a) the WitCert quantization configuration maintains quality at a 46.5\% byte budget (\textbf{94.42} vs 94.30 for full FP16, no degradation observed), while the strongest baseline EA scores \textbf{93.88} at its own 40.1\% token budget (a gap of $+0.53$ from unrounded scores---the rounded table entries differ by 0.54---\textbf{not significant}, and on a different budget basis; see note~$b$); (b) SnapKV and KnormPress \textbf{collapse silently} under the query-agnostic protocol---direct evidence for this paper's motivation: when compression fails, the system emits no signal; (c) these are single runs (limited statistical power), and maintained quality is not lossless quality (INT8 is inherently lossy).

\textbf{Cross-task robustness: LongBench-E with length buckets.} RULER is dominated by synthetic retrieval, so we also run LongBench-E~\cite{longbench2024}, whose length buckets (0--4k / 4--8k / 8k+) expose how quality moves with context length. Protocol matches the RULER setting above (Qwen2.5-7B, full split).

\textbf{Budget, verified by re-running.} On the WitCert side both K and V are 8-bit block-dithered with FP16 outlier pairs on K: K 145.1\,B $+$ V 129\,B $=$ 274.1\,B against FP16's 512\,B, i.e.\ \textbf{46.5\% saved}; EA's \texttt{compression\_ratio}${=}0.401$ \emph{drops 40.1\% of tokens}. The units differ (see the earlier note), so no ``tighter/looser budget'' claim is made in either direction; we simply state both configurations. (Verification: re-running hotpotqa\_e with the current press reproduces $59.92 / 58.25 / 51.63$ bit-for-bit, confirming the archived results came from the K$+$V version rather than an earlier key-only configuration.) Results are in Table~\ref{tab:longbench}.

\begin{table}[!ht]\centering\small\setlength{\tabcolsep}{4pt}
\caption{LongBench-E by length bucket (Qwen2.5-7B, full split, single run). $\Delta_{\mathrm{W}}$/$\Delta_{\mathrm{EA}}$ are vs.\ the uncompressed column; budgets as in the text.}\label{tab:longbench}
\begin{tabular}{llccccc}
\toprule
Task & Bucket & Uncompr. & WitCert & $\Delta_{\mathrm{W}}$ & EA & $\Delta_{\mathrm{EA}}$ \\
\midrule
2wikimqa\_e & 0--4k & 51.89 & 51.08 & $-0.81$ & 48.51 & $-3.38$ \\
2wikimqa\_e & 4--8k & 50.35 & 47.12 & $-3.23$ & 42.56 & $-7.79$ \\
2wikimqa\_e & 8k+ & 30.50 & 30.16 & $-0.34$ & 28.92 & $-1.58$ \\
gov\_report\_e & 0--4k & 35.13 & 34.98 & $-0.15$ & 35.19 & $+0.06$ \\
gov\_report\_e & 4--8k & 34.93 & 34.68 & $-0.25$ & 34.98 & $+0.05$ \\
gov\_report\_e & 8k+ & 33.01 & 33.09 & $+0.08$ & 34.11 & $+1.10$ \\
hotpotqa\_e & 0--4k & 58.71 & 59.92 & $+1.21$ & 58.56 & $-0.15$ \\
hotpotqa\_e & 4--8k & 58.15 & 58.25 & $+0.10$ & 56.53 & $-1.62$ \\
hotpotqa\_e & 8k+ & 51.00 & 51.63 & $+0.63$ & 55.36 & $+4.36$ \\
multifieldqa\_en\_e & 0--4k & 56.20 & 55.88 & $-0.32$ & 56.49 & $+0.29$ \\
multifieldqa\_en\_e & 4--8k & 45.04 & 44.98 & $-0.06$ & 46.00 & $+0.96$ \\
multifieldqa\_en\_e & 8k+ & 48.71 & 48.66 & $-0.05$ & 49.79 & $+1.08$ \\
passage\_retr\_en\_e & 0--4k & 99.00 & 99.00 & $+0.00$ & 99.00 & $+0.00$ \\
passage\_retr\_en\_e & 4--8k & 99.67 & 100.00 & $+0.33$ & 100.00 & $+0.33$ \\
passage\_retr\_en\_e & 8k+ & 100.00 & 100.00 & $+0.00$ & 100.00 & $+0.00$ \\
\bottomrule
\end{tabular}
\end{table}

Reading: (a) \textbf{14 of 15 buckets hold quality} ($|\Delta| \le 1.3$); the one material drop is 2wikimqa\_e at 4--8k ($-3.23$), where EA drops $-7.79$, $2.4\times$ ours. (b) Counting buckets losing more than 1\,pp: \textbf{WitCert 1/15 versus EA 4/15}, consistent with the RULER finding that quantisation degrades more uniformly than token dropping. (c) \textbf{Two buckets favour the baseline and we report them}: EA gains $+4.36$ on hotpotqa\_e 8k+ and $+1.10$ on gov\_report\_e 8k+, clearly better than ours---dropping tokens helps there, plausibly by removing distracting context; these are not noise-level differences. (d) \textbf{One task is missing}: all three \texttt{lcc\_e} runs failed, so this is 5 tasks rather than the 6 planned.

Boundary: single run; the two budget columns are different units; five tasks do not establish general cross-task robustness. The defensible statement is that no systematic degradation appears over these 5 tasks and 15 length buckets, and that degradation is more uniform than the strongest baseline at its own customary budget.

\subsubsection{The closed loop at head granularity ($S = 32768$, 8 heads, 32 steps)}
\label{sec:closedloop}
A certificate that only raises an alarm without driving an action is still after-the-fact prayer. Here we close the loop: when the certificate exceeds $\tau$, that KV head is paged back in from an FP16 backup on the CPU side and recomputed exactly.

\begin{table}[!ht]\centering\small
\caption{The certify--fallback closed loop ($S{=}32768$, 8 heads, 32 steps; single run).}\label{tab:closedloop}
\begin{tabular}{lc}
\toprule
Quantity & Value \\
\midrule
GPU packed / fp16 & 36.27 MiB / 64.00 MiB \\
Fallback rate (any-of-$G$) & 27.3\% \\
Fast-path error & $8.06\!\times\!10^{-3}$ \\
\textbf{Fallback cases: before $\to$ after} & $\mathbf{9.18\!\times\!10^{-3} \to 3.32\!\times\!10^{-4}}$ ($27.6\times$ reduction) \\
Gated effective width & \textbf{10.96 bits/dim} $\Rightarrow$ net saving \textbf{31.5\%} \\
\bottomrule
\end{tabular}
\end{table}

The key reading: \textbf{the fallback fires where it should}. If the certificate were uncorrelated with the true error, the 27.3\% of cases it selects would not have a markedly higher error than average, and the error would not drop $27.6\times$ after fallback. This is direct experimental evidence that the certificate is \emph{discriminative}, not merely sound---soundness guarantees no missed detection, discriminative power guarantees no vacuous alarms. The price is that the effective width rises from 8 to 10.96 bits, so the externally defensible claim is a \textbf{net saving of 31.5\%}, not 43.3\%.

\subsubsection{Block-level paging: two orders of magnitude off the fallback cost}
\label{sec:blockpage}
The paging in Sec.~\ref{sec:closedloop} is at whole-KV-head granularity (17.5 MB per step). But the certificate is \emph{block decomposable}: with $A = \sum_t \tilde p_t e^{u_t}$, replacing block $j$ by its exact values reduces $A$ by exactly $\mathrm{mass}_j\,(e^{u_j}-1)$, where $u_j$ is the block's shared radius. One can therefore page in blocks in decreasing order of their contribution to the excess, recomputing $A$ after each, until the value falls back below $\tau$.

Measured over the 9 fallback cases among 32 steps ($S = 32768$, $n_b = 128$ blocks): \textbf{the median case needs only 1 block} paged in to fall below $\tau$. Paging traffic drops from 72.0 MB to 0.5625 MB, a \textbf{reduction of $128\times$}. This turns fallback from ``recompute the whole head'' into ``page in a few named pages'', bringing the worst-case latency of a certified deployment under control.

\subsubsection{Design space: a branch rejected by data, and a zero-fallback operating point}
\label{sec:designspace}
\textbf{(a) The stochastic-rounding branch is rejected.} The proof--kernel contract requires the kernel to produce the error distribution assumed by the theorem. Subtractive dither requires PRNG reconstruction on the read side (77.9\% of the kernel), so the natural question is whether stochastic rounding (SR) could be used instead---it is equally unbiased and needs \emph{no} random numbers on the read side. We implemented and measured it: SR's certificate coverage is \textbf{identically 0.000} at every outlier budget. The reason is that the sub-Gaussian proxy of the SR residual is $\sqrt3$ times larger than that of uniform dither, and the proxy enters the certificate inside an exponential, so exponential amplification overwhelms a linear difference in variance.

\begin{table}[!ht]\centering\footnotesize\setlength{\tabcolsep}{4pt}
\caption{Stochastic rounding is rejected by data: its certificate coverage is identically zero at every outlier budget (offline harness, synthetic K).}\label{tab:sr}
\begin{tabular}{cccccc}
\toprule
$m$ (outlier pairs) & B/tok/head & Saving & Cov.\ (dither) & Cov.\ (SR) & Error (dither / SR) \\
\midrule
4  & 145.1 & 43.3\% & 0.762 & \textbf{0.000} & $8.02\!\times\!10^{-3}$ / $1.17\!\times\!10^{-2}$ \\
8  & 161.1 & 37.1\% & 0.926 & \textbf{0.000} & $7.81\!\times\!10^{-3}$ / $1.10\!\times\!10^{-2}$ \\
16 & 193.3 & 24.5\% & \textbf{1.000} & \textbf{0.000} & $7.03\!\times\!10^{-3}$ / $1.02\!\times\!10^{-2}$ \\
\bottomrule
\end{tabular}
\end{table}

(The $m{=}4$ dither coverage here, 0.762, differs from the 0.792 of Table~\ref{tab:msweep} because Table~\ref{tab:sr} is the offline harness on synthetic K and Table~\ref{tab:msweep} the on-device kernel; settings are otherwise identical, and the two harnesses' index layouts differ in the last digit of the byte account---161.1 vs 161.2 at $m{=}8$.) This negative result carries a meaning beyond implementation: it shows that \textbf{unbiasedness alone does not support a certificate---the constant in the proxy is what matters}. It also explains why our kernel overhead cannot be tuned away: the PRNG cost is algorithmic, not an implementation defect (parameter tuning left only a 1.7\% margin in measurement).

\textbf{(b) A zero-fallback operating point.} Table~\ref{tab:sr} also reveals a setting we had not anticipated: at $m = 16$ coverage reaches \textbf{1.000}, i.e.\ the certificate never exceeds the threshold, so \textbf{the entire fallback machinery (CPU backup, paging, recomputation) becomes unnecessary}. The on-device kernel reproduces this point:

\begin{table}[!ht]\centering\small
\caption{On-device outlier-budget sweep (synthetic K; a separate run from the quad report).}\label{tab:msweep}
\begin{tabular}{cccccccc}
\toprule
$m$ & B/tok/head & Saving & Coverage & Cert.\ median & Cert.\ max & Latency & Fallback needed \\
\midrule
4  & 145.1 & 43.3\% & 0.792 & 0.1883 & 0.2126 & 8.80 ms & yes \\
8  & 161.2 & 37.0\% & 0.958 & 0.1783 & 0.2014 & 9.88 ms & yes (4.2\%) \\
\textbf{16} & 193.3 & 24.5\% & \textbf{1.000} & 0.1596 & \textbf{0.1804} & 11.71 ms & \textbf{no} \\
\bottomrule
\end{tabular}
\end{table}

(This sweep is a separate run from the quad report, whence the $m{=}4$ latency reads 8.80 ms there and 8.627 ms here.) The price is that the compression ratio falls from 43.3\% to 24.5\%; the reward is that deployment complexity collapses to ``quantization plus one read-only risk indicator'', with no tiered cache. This is the simplest form of the ``DTrace-style observability layer'' proposed in Sec.~\ref{sec:intro}.

\textbf{But the margin at the zero-fallback point differs greatly between the two kinds of data, and the two must be stated separately.} Table~\ref{tab:msweep} uses synthetic Gaussian K, where the $m = 16$ certificate maximum of 0.1804 leaves only a \textbf{$1.11\times$} margin to $\tau = 0.2$---an operating point that \emph{hugs} the threshold, and would lose the zero-fallback property under a slight change of distribution. On real Qwen2.5-7B activations, by contrast, Sec.~\ref{sec:e2e} measures a maximum of 0.086, a \textbf{$2.3\times$} margin. The direction of the difference is as expected: synthetic Gaussian K has uniform channel energy, so the outlier bypass can rescue only a small fraction, whereas real activations concentrate heavily in a few channels that $m = 16$ happens to cover. \textbf{Conclusion: the zero-fallback point is robust on real models and marginal in the synthetic worst case; a production deployment should profile the real distribution before choosing $m$, rather than copying 16.}

\subsubsection{End-to-end validation: the meter inside a real generation loop}
\label{sec:e2e}
All the experiments above are of the form ``capture activations offline, then compute errors''. To rule out the optimistic bias of that form itself, we wired the packed store, the active-block tail policy and the write-path kernel into the HuggingFace \texttt{Cache} interface and ran real autoregressive generation on Qwen2.5-7B (all 28 layers active, prefill 2048 $+$ decode 32), computing the certificate from the \textbf{real query vectors captured by a hook} (after RoPE)---rather than using K as a proxy for the query.

Quality is measured by \textbf{teacher forcing}: we feed the baseline's own token sequence and compare logits step by step. Free greedy decoding is unsuitable as a quality metric---once any step diverges the trajectories separate permanently, so a ``text agreement rate'' reflects chaos rather than compression quality (we measured free greedy diverging at step 1, whereas under teacher forcing the top-1 agreement rate is 93.8\%).

\begin{table}[!ht]\centering\footnotesize\setlength{\tabcolsep}{4pt}
\caption{Real autoregressive generation via the HF \texttt{Cache} interface: teacher-forced quality, certificates from real hooked queries.}\label{tab:e2egen}
\begin{tabular}{ccccccc}
\toprule
$m$ & top-1 & top-5 & $\mathrm{KL}(\text{base}\,\|\,\text{ours})$ & saving & \textbf{Coverage (real $q$)} & Max meter \\
\midrule
4  & 0.938 & 0.925 & $2.31\!\times\!10^{-2}$ & 42.7\% & 0.867 & $1.6\!\times\!10^{2}$ \\
8  & 0.969 & 0.956 & $7.51\!\times\!10^{-3}$ & 36.5\% & 0.949 & 4.7 \\
\textbf{16} & 0.938 & \textbf{0.981} & $\mathbf{4.16\!\times\!10^{-3}}$ & 24.1\% & \textbf{1.000} & \textbf{0.086} \\
\bottomrule
\end{tabular}
\end{table}

Three observations.
\begin{enumerate}
\item \textbf{The zero-fallback setting holds in real generation.} (Packed savings here run ${\sim}0.5$\,pp below the on-device table---42.7 vs.\ 43.3\%---because this harness holds the active-block tail in fp16.) At $m = 16$, all 196 (layer, query head) samples---one GQA group of 7 query heads per layer, 28 layers---have certificates $\le \tau = 0.2$, with a maximum of 0.086, still $2.3\times$ below the threshold. This is the third independent piece of evidence for that operating point (offline, on-device kernel, real decoding), and the three agree.
\item \textbf{The certificate correctly identifies ``massive activation'' heads.} At $m = 4$ the maximum certificate value reaches $1.6\!\times\!10^{2}$---a few attention heads have channels of extreme magnitude (massive activations / attention sinks~\cite{massive2024,streamingllm2024}), and their quantization error is amplified exponentially by softmax. The certificate does not hide this; it exposes it explicitly as a threshold violation, which is precisely its design intent. As $m$ grows (bypassing those channels in FP16), the maximum collapses monotonically: $1.6\!\times\!10^{2} \to 4.7 \to 0.086$.
\item \textbf{KL improves monotonically while top-1 agreement does not} ($0.938 / 0.969 / 0.938$): the latter is noise at $n = 32$; KL and top-5 overlap are the metrics with power, and both improve monotonically with $m$.
\end{enumerate}

\textbf{Boundary (stated honestly).} The read path of this harness goes ``dequantize $\to$ standard attention'' (the HF \texttt{Cache} interface requires returning FP16 tensors), so the fused kernel is not inside this loop and its correctness rests on the independent contract tests; moreover the dequantized result of a full block is cached, so this experiment \textbf{does not constitute independent evidence of runtime memory saving}---that is supplied by the \texttt{torch.cuda.memory\_allocated} measurement in Sec.~\ref{sec:experiments}.

\subsubsection{Serving end to end}
\label{sec:serving-eval}
Results are in Table~\ref{tab:serving} (capacity and throughput: \textbf{median of 3 runs per cell}---a single measurement is easily fooled by run-to-run noise: our first reading suggested the certificate was free in production, and repeating it showed that was noise).

\begin{table}[!ht]\centering\footnotesize\setlength{\tabcolsep}{4pt}
\caption{SGLang serving end to end (RTX 4090, batch 16 $\times$ 64 tokens, \texttt{page\_size}${=}16$; capacity/throughput are medians of 3 runs, coverage is a full-sample single run under the request-level $\delta$ allocation).}\label{tab:serving}
\begin{tabular}{llrrc}
\toprule
Model & Configuration & KV capacity & Tok/s & Coverage ($\tau{=}0.2$) \\
\midrule
1.5B & fp16 & 650,368 & 686.8 & --- \\
1.5B & WitCert int8 & \textbf{1,224,240 ($1.88\times$)} & 511.8 ($0.75\times$) & 0.812 \\
1.5B & \textbf{+ outlier bypass $m{=}4$} & 1,095,360 ($1.68\times$) & 436.1 ($0.64\times$) & \textbf{0.940} \\
7B & fp16 & 114,128 & 421.7 & --- \\
7B & WitCert int8 & \textbf{214,832 ($1.88\times$)} & 347.8 ($0.82\times$) & 0.753 \\
7B & \textbf{+ outlier bypass $m{=}4$} & 192,224 ($1.68\times$) & 311.5 ($0.74\times$) & \textbf{0.946} \\
\bottomrule
\end{tabular}
\end{table}

Coverage is reported under the \emph{request-level} allocation $\delta_{\mathrm{loc}} = \delta_{\mathrm{req}}/(L\cdot H\cdot T)$ with $T{=}256$, $\delta_{\mathrm{req}}{=}10^{-2}$ (full-sample telemetry, single run). An earlier implementation allocated only over tokens and the current batch ($\log(2S\,n_q/\delta)$), without amortizing layers and steps; those figures---up to 5\,pp higher on 1.5B---could only be called \emph{local}-certificate coverage and are retired (Appendix~\ref{app:evolution}). Capacity and throughput are unaffected (the allocation only changes a scalar constant).

(a) \textbf{$1.88\times$ more KV tokens at the same memory budget} ($1.68\times$ with the bypass), identical at both scales. (b) \textbf{Quantisation alone costs 0.75--$0.82\times$ in throughput, far better than the $5.5\times$ seen at kernel level}; \textbf{kernel-level ratios such as the $5.41\times$ reported earlier badly overstate deployment cost} (Appendix~\ref{app:evolution}). (c) \textbf{The outlier bypass lifts production certificate coverage from 0.812/0.753 to 0.940/0.946}, at the cost of capacity ($1.88\times \to 1.68\times$) and a further ${\sim}0.1$ drop in the throughput ratio---a clear operator knob: \emph{capacity and speed, or a guarantee almost everywhere}. (d) The certificate's marginal cost is +6.4\% (1.5B) / +16.2\% (7B) without the bypass and +2.4\% / +11.8\% with it. This does not contradict the +0.35\% measured at kernel level, but the two must not be conflated: that came from an $S{=}4096$ microbenchmark where per-tile work is amortised over 128 tiles, whereas these requests are about 50 tokens long and the cost grows with query-head count (12 for 1.5B, 28 for 7B). \emph{The certificate is close to free on long sequences and a real 2--16\% cost on short-sequence decoding.}

\textbf{\texttt{page\_size}${=}16$ is faster than \texttt{page\_size}${=}1$} (1.5B, single-run A/B: 549.8 vs.\ 477.6 tok/s; Table~\ref{tab:serving}'s 511.8 is the later median-of-3): with larger pages the slots are contiguous and the halved int8 traffic starts to pay off.

\begin{figure}[t]\centering
\includegraphics[width=0.92\linewidth]{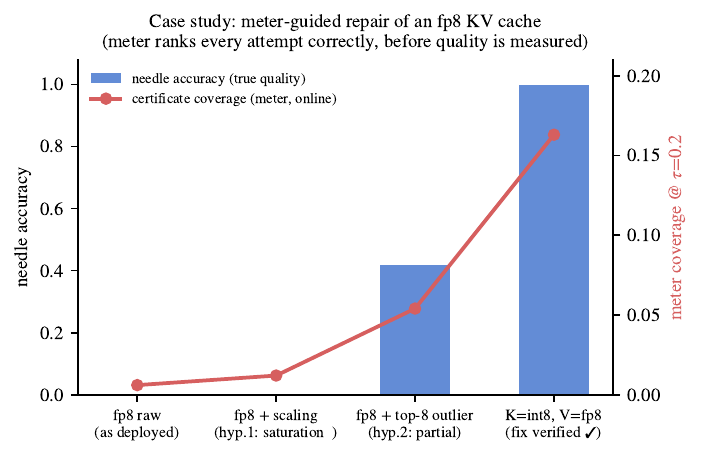}
\caption{Meter-guided repair of a broken fp8 KV cache (24-prompt needle task). Four configurations are tried in sequence---raw fp8, $+$scaling, $+$top-8 outlier bypass, and K=int8/V=fp8---and the meter's coverage (line, computable online without labels) ranks every attempt in the same order as ground-truth accuracy (bars): the diagnosis is available before any evaluation is run.}\label{fig:f1}
\end{figure}

\begin{figure}[t]\centering
\includegraphics[width=0.92\linewidth]{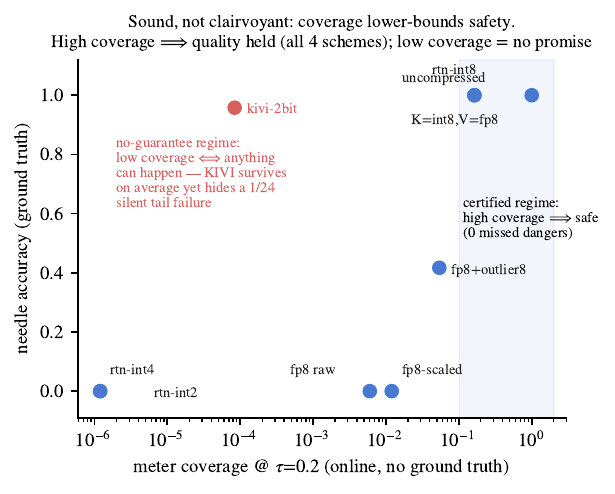}
\caption{Sound, not clairvoyant: meter coverage at $\tau{=}0.2$ (online, no ground truth) against needle accuracy (ground truth) for nine configurations. High coverage implies safety (zero missed dangers across all schemes); low coverage guarantees nothing in either direction---KIVI-2bit survives on average yet hides a 1/24 silent tail failure.}\label{fig:f2}
\end{figure}

\begin{figure}[t]\centering
\includegraphics[width=0.92\linewidth]{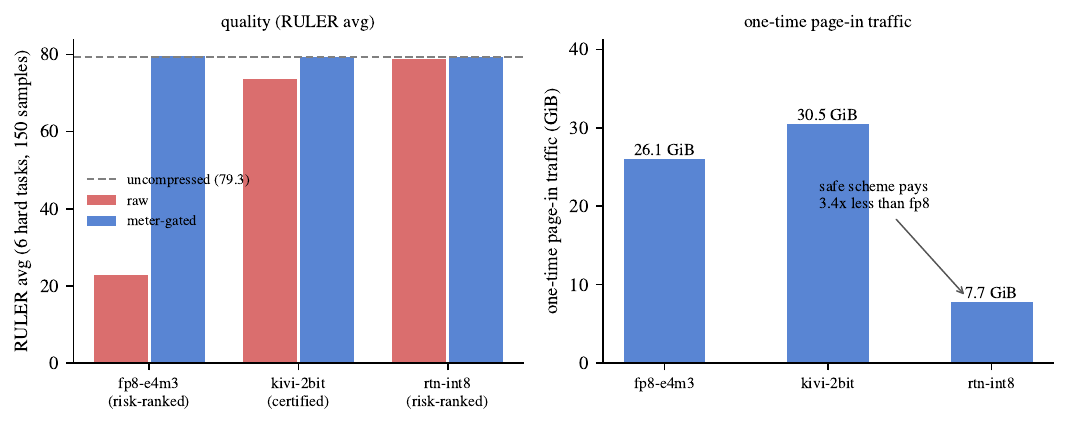}
\caption{Meter gating at benchmark scale (labels per Sec.~\ref{sec:twosided}: risk-ranked at $\tau{\ge}1$, certified at $\tau{<}1$): quality floor empirically restored (left; a benchmark outcome, not a corollary of the TV bound); one-time page-in traffic under incremental persistent repair (right)---the safe scheme pays $3.4\times$ less than the broken one.}\label{fig:f6}
\end{figure}

\subsubsection{Gated evaluation at benchmark scale}
\label{sec:gatedeval}

The needle-rescue demonstration underlying Figs.~\ref{fig:f1}, \ref{fig:f2} and \ref{fig:f4} covers 24 prompts. Here we evaluate at benchmark scale (Table~\ref{tab:gated}): the six hard RULER-4096 tasks (niah\_multikey\_2/3, niah\_single\_2, niah\_multiquery, qa\_1, qa\_2) $\times$ 25 samples $=$ 150 prompts, Qwen2.5-7B, served through the SGLang observatory, with \emph{incremental persistent repair} (un-repaired slots of an over-threshold request are paged in exactly once, written back to the pool, and their witnesses zeroed):

\begin{table}[!ht]\centering\small\setlength{\tabcolsep}{4pt}
\caption{Gated evaluation at benchmark scale (six hard RULER-4096 tasks $\times$ 25 samples, Qwen2.5-7B via the SGLang observatory, single run; slice-0 of three disjoint slices). Coverage$^{a}$ is always counted at $\tau{=}0.2$.}\label{tab:gated}
\begin{tabular}{lcccc}
\toprule
Configuration & avg & zero-score & coverage$^{a}$ & one-time page-in \\
\midrule
uncompressed & 79.3 & 19 & 1.000 & --- \\
fp8 raw & \textbf{22.8} & \textbf{114} & 0.019 & --- \\
\textbf{fp8 + gate ($\tau{=}5$, risk-ranked)} & \textbf{79.7} & \textbf{19} & 0.537 & 26.1 GiB \\
kivi-2bit raw & \textbf{73.7} & \textbf{28} & ${\sim}5\!\times\!10^{-5}$ & --- \\
\textbf{kivi + gate ($\tau{=}0.2$, certified)} & \textbf{79.3} & \textbf{19} & 0.978 & 30.5 GiB \\
rtn-int8 raw & 78.7 & 20 & 0.207 & --- \\
rtn-int8 + gate ($\tau{=}5$, risk-ranked) & 79.3 & 19 & 0.485 & \textbf{7.7 GiB} \\
\bottomrule
\end{tabular}

\smallskip
{\small $^{a}$\,Gated-row coverage is \emph{post-repair} (paged-in slots have their witnesses zeroed), raw-row coverage is pre-repair; e.g.\ the kivi pair ${\sim}5\!\times\!10^{-5} \to 0.978$ is a before/after pair of the same quantity, not two measurements of one state. All coverage values are counted at $\tau{=}0.2$ regardless of the gate threshold. Gate labels follow Sec.~\ref{sec:twosided}: at $\tau\ge1$ the meter is saturated and acts as a risk score with no TV guarantee; at $\tau<1$ the gate is certified.}
\end{table}

\emph{Repeatability:} Table~\ref{tab:gated} shows slice 0; two further \emph{disjoint} 25-sample-per-task slices (450 unique samples in total) replicate every directional conclusion---uncompressed 79.3/87.2/83.4 (mean 83.3), raw fp8 22.8/27.8/28.7 (mean 26.4), gated means within 0.4 of uncompressed, and KIVI's extra zero-score tail ($+9$ to $+12$) eliminated on every slice.

\emph{Confidence intervals (per-sample paired bootstrap, $10^4$ resamples).} A fresh rerun of all seven configurations with per-sample logging (slice 0; every configuration reproduces within run-to-run noise, largest shift kivi raw $73.7\to71.3$) gives, paired against uncompressed on the same 150 prompts: the gated configurations lose nothing---the int8 and kivi gates score \emph{identically to uncompressed on every single sample} (paired difference $0.0$), and the fp8 gate differs by $+0.3$ (95\% CI $[+0.0,+0.8]$; unrounded means---Table~\ref{tab:gated}'s slice difference is $+0.4$)---while the raw collapses are statistically significant: fp8 $-56.6$ $[-64.1,-49.1]$, and KIVI's silent tail is now significant rather than anecdotal, $-8.1$ $[-13.8,-2.6]$. Data: \texttt{experiments/out/te\_ci.json}.

(a) \textbf{The quality floor is fully restored at benchmark scale---an empirical result} (Fig.~\ref{fig:f6}): all three gated schemes return to the uncompressed level (79.3--79.7); fp8 gains $+56.9$ points, and KIVI's \textbf{9 silent zero-score samples} (28 vs.\ 19) are all eliminated---the 1/24 needle tail failure was not an accident but a systematic tail. For fp8 and int8 this is \emph{risk-ranked} gating: at $\tau{=}5$ the meter carries no TV guarantee and acts as an empirical ranking signal, whose discriminative power is established independently by the $27.6\times$ fallback-error drop of Sec.~\ref{sec:closedloop}; the kivi gate ($\tau{=}0.2$) operates in the certified regime. (b) \textbf{Selectivity holds at scale}: the safe scheme (int8) pays 7.7 GiB of page-in, $3.4\times$ less than the broken one. (c) \textbf{Incremental persistent repair is essential engineering}: naive whole-request re-paging reaches 1267 GiB on gated fp8 (each newly written token's witness re-fires the gate); filtering already-repaired slots by witness-nonzero brings it to 30.9 GiB, a \textbf{$41\times$} reduction (both measured at the certified $\tau{=}0.2$ gate; the $\tau{=}5$ row of Table~\ref{tab:gated} pays 26.1 GiB). (d) Boundary: 150 samples, single run; a second faithful replica of a published method (kvquant4: per-channel 4-bit $+$ top-1\% sparse outliers, coverage 0.00066) also shows ``perfect average, ${\sim}0$ coverage'' on needle (cf.\ Fig.~\ref{fig:f3})---corroborating with KIVI that real aggressive schemes hold the mean through structure, admit no informative per-step certificate \emph{from the current witness}, and rely on the gate for the tail.

\subsubsection{Scale and cross-generation evaluation ($8\times$H200)}
\label{sec:h200}

The systems results above were produced on a single Ada GPU (RTX 4090). This section ports the same patch and protocols to $8\times$H200 (143\,GB each, Hopper) and answers four questions: do the structural numbers reproduce across GPU generations, do they hold up the scale ladder, how does the system compare in-frame against a full certified-serving baseline, and where is the real cost frontier of certification.

\textbf{Cross-generation consistency and the scale ladder.} The capacity ratio is determined by the data layout, and measurement confirms it: \textbf{1.882$\times$} (no bypass) is bit-identical across 7B/14B/32B single-GPU, 7B tp2/tp4, and 70B tp4 (70B tp4: witcert capacity \textbf{2{,}183{,}776} tokens vs.\ fp16 1{,}160{,}128; throughput ratio 0.79$\times$); with outlier bypass, 1.684$\times$ is likewise invariant. The four request-level-$\delta$ coverage cells (1.5B/7B $\times$ with/without bypass) match the 4090 values within 0.3\,pp, with the 7B-with-bypass cell (0.9457) bit-identical. The $\delta$ risk--coverage protocol of Sec.~\ref{sec:experiments} extends to 14B/32B with zero violations at every $\delta$ and coverage in the same band as 7B; \textbf{72B coverage is 25.5--29.8\%} (three domains, zero violations)---markedly below the 54--81\% of 7--32B, a second weakening result after Yi, reported as such (same-type conjecture: interaction of block-wise scaling with that model's activation statistics). Tensor parallelism: the packed pool shards per rank with no code change; at tp2, witcert \emph{overtakes} fp16 (338 vs.\ 287 tok/s---in the communication-bound segment the smaller KV read becomes an advantage), and tp8 is structurally infeasible for 28-head Qwen-7B ($28 \nmid 8$), so the 8-GPU rung is carried by 64-head Llama-70B.

\textbf{In-frame against the Runtime-Certified full system.} Answering the fourth review round, we reproduce the complete system of~\cite{runtime_certified_2026} from its paper description (INT8-key/INT4-value two-tier store, two-pass pipeline, adaptive top-$K^{*}$ promotion at $\tau_{\mathrm{cov}}{=}0.995$, value promotion, ranking-consistency fallback, FP16 originals pinned in CPU RAM; no public code exists, and our specification notes declare every deviation---HF/SDPA level, bf16 weights, our corpus). Fidelity checks pass: its Rung-3 trigger rate is 1.02\% of head-steps (same order as the original), $K^{*}{=}117$ sits near the cap exactly as its short-context saturation predicts, and the reproduced fast path is bit-identical to dense in ppl. In-frame results (Llama-3.1-8B, 8 paired windows at ctx 8192): $\Delta$ppl RC \textbf{+0.0059} / WitCert +0.0123, both quality-neutral; on 7 RULER-4096 tasks all three systems tie at 0.9924. The \emph{architectural} difference is what separates them: RC's fast path pages promoted FP16 keys from CPU \emph{on every step}---measured at \textbf{264\,MB per request} (RULER-4096)---and requires a permanently resident 100\% FP16 CPU copy, whereas WitCert's certified mode pages in \emph{nothing} and holds no CPU replica; page-in occurs only when the gate fires. In a unified per-token ledger (same memory pool, normalized): relative GPU bytes fp16 1.0 / witcert 0.531 / witcert+bypass 0.594 / RC tier-1 0.563; CPU replica 0/0/0/1.0; steady-state PCIe 0/0/0/264\,MB$\cdot$req$^{-1}$.

\textbf{Packed gated closed loop end to end (closing contribution 4).} The packed real store (int8 + dither + outlier bypass) without gating scores \textbf{78.83} vs.\ fp16 78.67 on RULER 6 hard tasks $\times$ 25---the first end-to-end quality measurement of the packed path, quality-neutral; with the $\tau{=}0.2$ certificate-gated closed loop (full-pool CPU backing, request-level repair) it scores \textbf{78.83}, identical to ungated. Our first run collapsed to 2.0: the packed pool never wrote its CPU backing (allocated but forever zero), so repair paged back all-zero KV; instrumented cross-checking ($|kk|{=}0$ while the recomputation matched an fp32 reference exactly) localized it, a one-line fix restored full quality---a bug that is itself a working argument for runtime verification, since the original smoke test (``no crash, bytes moved'') was green while moving zeros. The gating cost is reported honestly: per-request page-in P50/P95 = \textbf{2.81 / 3.58\,GiB} (serial differencing over 150 requests)---the intrinsic cost of non-persistent packed repair (the pool stays quantized, so the certificate re-fires each step); a request-level repair cache is the identified optimization, and this distribution is exactly its upper bound.

\textbf{Long-context serving and the cost frontier.} On Llama-8B (paired, CUDA graphs on): witcert median TTFT ratio \textbf{1.84$\times \to$ 1.89$\times$} and TPOT 1.47$\times \to$ 1.49$\times$ from 8k to 32k inputs---\emph{the cost ratio is flat in $S$}, which corrects our earlier ``long sequences are nearly free'' inference to ``the cost ratio is constant.'' The concurrency--throughput frontier (7B, 1024-in/256-out, $c{=}1 \to 512$): fp16 203 $\to$ 14{,}324 tok/s, witcert 155 $\to$ 6{,}207; the most adverse operating point (short requests at high concurrency, where attention dominates step time and the capacity advantage is idle) is \textbf{0.43$\times$}, single-stream is 0.78$\times$, and tp2 reverses the sign---we publish both full curves and state plainly that witcert's home turf is the capacity-limited long-context/multi-session segment, not short-request high concurrency. 70B on 2 GPUs (tp2) serving 128k requests: fp16 capacity 339{,}216 tokens with a $16\times127$k-request benchmark on record (witcert under the same protocol; its extend-kernel prefill cost at 128k is significant and is itemized in the overhead summary).

\subsection{Analysis}
\label{sec:analysis}
\textbf{What the certificate does and does not guarantee.} The formal object is a \emph{local} bound: per-(layer, head, step) attention TV (and, through the value term, that step's attention output). End-task quality additionally passes through residual streams, MLPs, normalization, later layers and sampling, and Sec.~\ref{sec:mechanism} itself shows the map from local fidelity to task quality is not one-to-one (single-layer pollution is fully absorbed, 0/28). Consequently every ``quality floor restored'' statement in this paper is an \emph{empirical} benchmark result about gating driven by the meter---never a corollary of the TV bound. The formal guarantee and the empirical outcome are deliberately reported side by side and must not be conflated.

\begin{figure}[t]\centering
\includegraphics[width=0.68\linewidth]{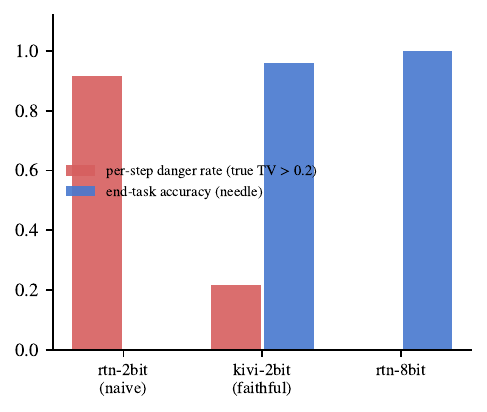}
\caption{Aggressive schemes survive on structure, not per-step fidelity: per-step danger rate (fraction of steps with true TV $>0.2$; red) against end-task needle accuracy (blue). KIVI-2bit distorts 21.9\% of steps yet retrieves 95.8\%, and its one silent failure is exactly what the gate rescues; naive rtn-2bit distorts more and collapses.}\label{fig:f3}
\end{figure}

\begin{figure}[t]\centering
\includegraphics[width=0.74\linewidth]{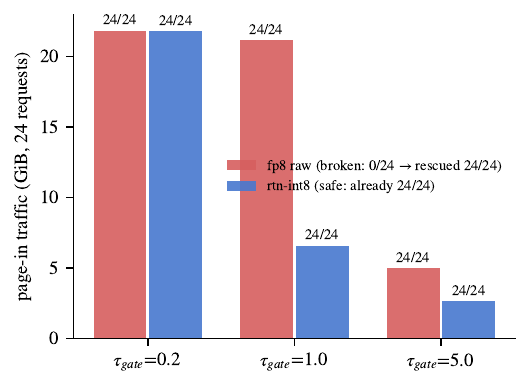}
\caption{The gate is selective and tunable: one-time page-in traffic (24 requests) as the gate threshold sweeps $\tau_{\mathrm{gate}} \in \{0.2, 1, 5\}$. Raising the threshold cuts traffic $4.4\times$ with no accuracy loss (all bars remain 24/24), and the safe scheme (rtn-int8) triggers $2$--$3\times$ less than the broken one (fp8) at every threshold.}\label{fig:f4}
\end{figure}

\subsubsection{Mechanism: the failure is cross-layer accumulation}
\label{sec:mechanism}
\textbf{There is no critical layer.} Polluting a \emph{single} layer with kivi2 while keeping all others exact, swept over all 28 layers, loses nothing: \textbf{0/28} layers show any degradation (accuracy 1.0 each), while polluting all layers gives 0.958. Single-layer distortion is fully absorbed downstream; the observable failure is purely accumulative. This explains why per-step sound certificates are ``too strict'' for aggressive schemes---they must assume worst-case accumulation while the real network cancels heavily across layers---and it motivates both request-level (rather than per-layer) gating and the open problem of tighter multi-layer composed certificates.

\subsubsection{Bound shapes across witness regimes}
\label{sec:boundshape}
\textbf{System-level comparison with the Runtime-Certified bound shape (same witness, same harness, same $\tau{=}0.99$).} We implemented the competitor's deterministic shape $\tanh(\max_t \Delta_t)$ inside the same kernel (max-merged across splits) and gated the same RULER task. Both shapes empirically restore the quality floor (79.3--80.1), but in the loose Tier-A witness regime \textbf{the det-tanh gate is $2.5$--$5.2\times$ cheaper in traffic} (fp8: 12.1 vs.\ 30.8 GiB; int8: 5.3 vs.\ 27.3 GiB)---tanh is capped while the e-form's $\sum \tilde p\, e^{c}$ is exponentially amplified by a single large $c$. This does not contradict the e-form's dominance for the controlled dithered quantizer (coverage 54.4--81.0\% vs.\ 11.8--52.0\% at $\delta{=}10^{-2}$): \emph{the right bound shape depends on the witness-tightness regime}, and the framework accommodates both behind an environment switch. (At $\tau{=}0.99$ both gates sit just inside the certified regime, but the Tier-A witness is loose there, so the comparison is of risk-score behaviour near saturation.) We report this partially competitor-favouring result as is.

\subsubsection{The Yi anomaly, resolved}
\label{sec:yianomaly}
\textbf{This also resolves the Yi anomaly reported earlier.} There, Yi-1.5-6B's coverage (54.4--64.8\%) was markedly below Qwen's (${\sim}80\%$) and we speculated, without verification, that its \texttt{rope\_theta}${=}5{\times}10^{6}$ was responsible. The measured answer is different: \textbf{the effect comes from block-level scaling, not from the model.} Yi's channel energy is less concentrated (top-4 RoPE pairs carry 41.2\% of the energy versus Qwen's 54.3\%), and a block scale---one value covering 256 tokens---is dragged down by the worst channel in the block; per-token scaling adapts, and under it Yi is in fact \emph{easier} than Qwen (0.907 vs.\ 0.525 at $m{=}0$). The phenomenon is an interaction between the model's energy distribution and the scale granularity, and it disappears under the per-token group scaling actually used in deployment. \emph{We retract the rope\_theta conjecture.}

\subsubsection{Regime statement}
\label{sec:regime}
This work \textbf{does not enter the pure compression-ratio contest}. The regime in which we claim leadership is delimited by three quantifiable conditions:
\begin{enumerate}
\item \textbf{Guarantee dimension}: deployments requiring per-(layer, head, step) error upper bounds with an explicit failure-probability budget. The only existing methods in this regime are the Runtime-Certified family~\cite{runtime_certified_2026} (deterministic $\tanh$); at $\delta_{\mathrm{req}} = 10^{-2}$ we cut the KV-head page-in rate from 44.4\% to 22.0\% and authorize compression 3.43 bits/dim deeper (Qwen2.5-1.5B, 2k context, global scaling), with a \textbf{28.7--77.0\%} relative reduction in page-in rate across three model families and three domains (6--7B class, 4--8k context, online block-wise scaling).
\item \textbf{Workload}: decode-bound long context ($S \ge 8$k, small $n_q$), memory-traffic limited; the marginal overhead of the certificate has been measured in this regime ($-1.3\%$ in our kernel, $+11.9\%$ in the SGLang production kernel). Prefill-heavy and training scenarios are out of scope.
\item \textbf{Hardware and scale}: one to eight GPUs across two generations (single RTX 4090, Ada; $8\times$\allowbreak H200, Hopper---\allowbreak the structural numbers, capacity ratios and coverage, reproduce bit-identically across generations, Sec.~\ref{sec:h200}); $\delta$ curves span 1.5B--72B, serving spans 7B--70B (tp1/2/4) plus 70B on two GPUs at 128k; native-128k coverage is measured on Llama-3.1-8B (Sec.~\ref{sec:s128k}). Under tensor parallelism the packed pool shards per rank and works unchanged; cross-rank certificate aggregation (an any() gate across ranks) remains unverified.
\end{enumerate}
\textbf{Dimensions on which we do \emph{not} lead (stated honestly)}: absolute compression ratio (RTN-INT4 and the 1-bit families save more), absolute latency (our unoptimized kernel is $5.41\times$ an fp16 baseline in Sec.~\ref{sec:quad}; 0.43$\times$ throughput at the short-request high-concurrency extreme, Sec.~\ref{sec:h200}), and pure-throughput scenarios that need no guarantee. The certificate-telemetry path is currently incompatible with CUDA-graph capture (a host synchronization); cert-on deployments must disable graph capture or await a kernel-side fix.

\subsubsection{Every overhead and byte-saving figure in one place; choosing $\tau$ and $\delta$}
\label{sec:costsummary}
The meter's overhead and the byte savings were each measured several times in this paper, under deliberately different conditions. To prevent the reader from carrying six numbers, the two tables below list every figure with its configuration; each is discussed at its point of measurement.

\begin{table}[!ht]\centering\footnotesize\setlength{\tabcolsep}{3.5pt}
\caption{Every overhead figure in this paper, with its configuration.}\label{tab:overheads}
\begin{tabular}{rlll}
\toprule
Overhead & Kernel / path & Setting & What is measured \\
\midrule
$-1.3\%$ & research kernel (dither-dominated) & $S{=}32768$ micro & cert on/off latency delta (\ref{sec:cheap}) \\
$+11.9\%$ & SGLang decode kernel, fp16 store & $S{=}4096$ micro & one extra accumulator (\ref{sec:sglang}) \\
$+0.35\%$ & SGLang decode kernel, quantized store & $S{=}4096$ micro & cert shares the scale load (\ref{sec:storage}) \\
$+6.4\%$ / $+16.2\%$ & serving 1.5B / 7B, no bypass & batch 16 $\times$ 64-tok & end-to-end throughput (\ref{sec:serving-e2e}) \\
$+2.4\%$ / $+11.8\%$ & serving 1.5B / 7B, $m{=}4$ bypass & batch 16 $\times$ 64-tok & end-to-end throughput (\ref{sec:serving-e2e}) \\
$1.2\%$ & telemetry alone, 7B & isolated A/B & 287.6 vs.\ 284.1 tok/s (\ref{sec:serving-e2e}) \\
\bottomrule
\end{tabular}
\end{table}

\noindent The external figures to quote: \textbf{$+11.9\%$} at kernel level on an fp16 store, \textbf{$+0.35\%$} once storage is quantized, \textbf{2--16\%} on short-request serving, near free on long sequences.

\begin{table}[!ht]\centering\footnotesize\setlength{\tabcolsep}{3.5pt}
\caption{Every byte-saving figure in this paper, with its configuration.}\label{tab:savings}
\begin{tabular}{rlll}
\toprule
Saving & Scope & Configuration & How measured \\
\midrule
43.3\% & K side & block scale, $m{=}4$, packed store, $S{=}32768$ & \texttt{torch.cuda.memory\_allocated} (\ref{sec:experiments}) \\
46.5\% & K$+$V & RULER config: K 145.1\,B $+$ V 129\,B vs.\ 512\,B & computed, re-verified (\ref{sec:ruler}) \\
46.9\% & K$+$V & per-token $G_c{=}32$ scaling (SGLang store) & measured, 136 B per side (\ref{sec:storage}) \\
31.5\% & net effective & after gated fallback, 10.96 bits/dim & closed-loop account (\ref{sec:closedloop}) \\
24.5\% & K side & $m{=}16$ zero-fallback point & computed (\ref{sec:designspace}) \\
\bottomrule
\end{tabular}
\end{table}

\textbf{Choosing $\tau$ and $\delta$ in practice.} In the certified regime the threshold is an output-error budget: by the output bound of Sec.~\ref{sec:subg}, a TV threshold $\tau$ admits at most $2\tau \max_t \|v_t\|$ of attention-side output error, so an operator walks their output tolerance backwards through that inequality; our defaults $\tau_K{=}0.2$, $\tau_V{=}0.05$ came from this exercise. In the risk-ranked regime ($\tau \ge 1$) no such translation exists; there $\tau$ is a traffic knob, and the principled procedure is to profile the raw score's distribution on a handful of requests and set $\tau$ at the quantile whose page-in budget the deployment affords (Fig.~\ref{fig:f4}); the $\tau{=}5$ of Sec.~\ref{sec:gatedeval} was chosen this way. On $\delta$: the union-bound allocation is conservative but enters only through $\sqrt{\log}$. The offline main experiments allocate over $N_{\mathrm{cell}} = L\cdot H\cdot S$ (layers $\times$ query heads $\times$ tokens; $28\cdot28\cdot8192 \approx 6.4\times10^{6}$ at 7B/8k), which inflates the radius $u_t$ by $\sqrt{\log(2S N_{\mathrm{cell}}/\delta_{\mathrm{req}})/\log(2S/\delta_{\mathrm{req}})} \approx 1.45$ at $\delta_{\mathrm{req}}{=}10^{-2}$---a ${\sim}45\%$ price. A deployment that instead allocates over its actual decode-step budget (the SGLang path uses $T{=}256$ by default: $L\cdot H\cdot T \approx 2.0\times10^{5}$) pays ${\approx}1.36$. Either way the $\delta$ sweep of Sec.~\ref{sec:experiments} shows the conservatism is cheap to carry: $500\times$ tighter $\delta$ costs at most 3.2\,pp of coverage.

\subsubsection{Open problems}
\label{sec:open}

Three problems remain open and we state them as such. \textbf{(1) Tightening the universal witness below 8 bits.} The deterministic band-norm witness is vacuous for 2--4 bit schemes ($1289\times$ loose at 4-bit); candidate directions are second-moment witnesses, finer bands, and calibration-assisted hybrid bounds. Until then, selective gating is limited to the 8-bit-class regime and coarser schemes fall back to always-on tiered repair. \textbf{(2) Multi-layer composed certificates.} Our layer-pollution sweep shows the observable failure of aggressive schemes is purely cross-layer accumulation (0/28 single layers matter), so a per-step per-layer bound is intrinsically conservative; a sound certificate that composes across layers would close much of the gap, and nothing in our framework forbids it. \textbf{(3) Certifying eviction online.} The exact distortion of token dropping equals the dropped attention mass and separates methods cleanly offline, but every per-block constant-size witness we tried is vacuous online due to logit cancellation; whether a cheap sound online witness exists for eviction is open.

Distinct from these theory questions, the engineering ledger: the packed path's end-to-end quality and gated closed loop are now measured and closed (Sec.~\ref{sec:h200}); newly opened are a request-level repair cache (the gated page-in distribution bounds its benefit), a CUDA-graph-safe certificate telemetry path (eliminating the host synchronization), and the anomalous cert-on long-context decode amplification on the 4-KV-head 7B configuration (TPOT $9\times$ at 30k inputs, asymmetric to Llama-8B's 1.49$\times$; reproduced, not yet dissected).

\section{Reproducibility}
\label{sec:repro}
\textbf{Four independent implementations, cross-checked.} The numerical correctness of the certificate is jointly guaranteed by four mutually independent implementations: (i) the Triton kernel; (ii) a PyTorch reference; (iii) a 50-digit mpmath golden specification; and (iv) a fourth implementation rewritten \textbf{from the theorem statement alone, without reading the code of the other three} (\texttt{tests/test\_independent\_crosscheck.py}). The role of the fourth is to rule out the failure mode ``three implementations share one misunderstanding''---the first three are one author's three encodings of the same formula, whereas the fourth follows only the definitions as stated in the paper. The three and the fourth agree to 40 digits of precision. The formal development builds on Lean~4~\cite{lean4} and Mathlib~\cite{mathlib2020}. Lean verifies the stated theorems; whether the deployed loop satisfies their hypotheses is exactly what the proof--kernel contract tests and the non-adaptive-query scope of Sec.~\ref{sec:tierb} delimit---machine-checking does not substitute for either.

Code and artifacts: released as \texttt{witcert-kv-certificates}
(\url{https://github.com/metask-ai/witcert-kv-certificates},
Apache-2.0). Every number is recomputable from the frozen
JSON artifacts shipped with the release; the claim guard
(\texttt{tests/test\_paper\_claims.py}) fails the build on any mismatch
between manuscript and artifacts.
\begin{itemize}\sloppy
\item Theory regression: \texttt{tests/test\_certificates.py} (two-token counterexample guard $+$ 500k adversarial trials).
\item Kernel acceptance: \texttt{tests/test\_kernel\_match.py} (elementwise agreement / no clipping / stream uniqueness / round-trip), \texttt{tests/test\_attn\_cert.py} (kernel vs reference $+$ soundness).
\item Experiments: \texttt{experiments/p12}--\texttt{p21} (certificate evolution, cross-model $\delta$ curves, the M3 design gate, the quad report, CUDA graph).
\item SGLang integration: a 54-line revertible telemetry patch plus the
raw $3\times3$ A/B logs; the production decode-kernel certificate of
Sec.~\ref{sec:sglang} is a 500-line serving patch. The serving-stack
patches are not part of the public release (they modify a third-party
numeric path); their measured outputs are.
\end{itemize}
Environment: RTX 4090 / CUDA 12.4 / PyTorch 2.6.0 / Triton 3.2; SGLang 0.5.9 with PyTorch 2.9.1 (cu128). Randomness is fully specified by the Philox five-tuple counter, with seeds recorded in the result JSON metadata.

\section{Conclusion}
A sound runtime meter turns KV-cache compression from an open-loop bet into an observable, gateable system quantity: any cache-preserving scheme can be measured in live serving, broken schemes are repaired at benchmark scale by meter-driven gating (fp8 restored from 22.8 to 79.7, paired difference $+0.3$ $[+0.0,+0.8]$ against uncompressed), and analysis with the meter shows that what keeps aggressive schemes alive is cross-layer cancellation rather than per-step fidelity. On the certified tier, trading an explicit request-level risk budget for coverage is what turns the certificate from decoration into a tool: the sub-Gaussian certificate reduces the page-in rate of a sound deterministic $\tanh$ bound by \textbf{28.7--77.0\%} relative, with very weak dependence on $\delta$, and subtractive dither---the quantizer that makes it valid---is quality neutral against same-bit-width RTN, so the guarantee costs no accuracy. We also report, as first-class results, a plausible certificate that is unsound (with its minimal counterexample and sound replacement), a tightening that a theorem promises and measurement declines to deliver, and every retraction made along the way; the counterexample--repair--adversarial-validation methodology (Appendix~\ref{app:evolution}) and all raw data are released with the code.

\clearpage
\appendix
\section{Machine-checked theorem statements (compiled, not transcribed)}
\label{sec:theorems}

This section is \emph{compiled} from the Lean development shipped in the released repository (\texttt{witcert-kv-cer\-tif\-i\-cates}): \texttt{formal/Export.lean}
walks the environment, prints each theorem's \emph{elaborated} type with Lean's own
pretty-printer, and lists the axioms each proof actually depends on via
\texttt{collectAxioms}; for the pdflatex toolchain, non-ASCII symbols are
mechanically transliterated ($\mathbb{R} \to \mathrm{R}$, $\mathbb{N} \to \mathrm{Nat}$, etc.).
Nothing here is transcribed by hand, so a statement that is not
proved cannot appear. Twice in this project's history a certificate formula was falsified
by external review, and in both cases the root cause was that the statement on paper and
the statement that holds were not the same object. The development proves 4 \emph{core} theorems---the results cited in the body---plus 30 supporting lemmas, many of which are typeclass boilerplate of the dependency-free layer
and are listed compactly below only so that the axiom audit covers the whole development;
every proof depends only on \texttt{propext}, \texttt{Classical.choice} and \texttt{Quot.sound}
(or on nothing at all), and none on \texttt{sorryAx}.
The whole development re-verifies independently: clone
\url{https://github.com/metask-ai/witcert-kv-certificates} and run
\texttt{cd formal \&\& bash check\_all.sh}, which rebuilds every proof and re-runs the
axiom audit from scratch; the \texttt{file:line} references below resolve in that repository.

\subsection{Mathlib layer (reals / measure theory)}

\paragraph{L4: proof-kernel refinement (blockwise = per-token)}
{\small\raggedright \emph{Source:} \texttt{formal/WitCert/Refinement.lean:38}; \emph{axioms:} \texttt{propext, Quot.sound, Classical.choice}.\par}
{\scriptsize\begin{verbatim}
forall  {iota : Type u_1} {beta : Type u_2} [inst : Fintype iota] [inst_1 : DecidableEq beta]
    (p : iota -> R) (blk : iota -> beta) (wB : beta -> R)
  (B : Finset beta), (forall  (t : iota), blk t  in  B) -> WitCert.A_blockwise p blk wB B =
    WitCert.A_perToken p fun t => wB (blk t)
\end{verbatim}}

\paragraph{L3: request-level union budget}
{\small\raggedright \emph{Source:} \texttt{formal/WitCert/RequestBudget.lean:23}; \emph{axioms:} \texttt{propext, Classical.choice, Quot.sound}.\par}
{\scriptsize\begin{verbatim}
forall  {Omega : Type u_1} [inst : MeasurableSpace Omega] (mu : MeasureTheory.Measure Omega) {n
    : Nat} (E : Fin n -> Set Omega)
  (deltaloc deltareq : ENNReal), (forall  (i : Fin n), mu (E i) <= deltaloc) -> n * deltaloc <=
    deltareq -> mu (Union  i, E i) <= deltareq
\end{verbatim}}

\paragraph{L2 core: sub-Gaussian proxy of uniform dither}
{\small\raggedright \emph{Source:} \texttt{formal/WitCert/UniformDither.lean:57}; \emph{axioms:} \texttt{propext, Classical.choice, Quot.sound}.\par}
{\scriptsize\begin{verbatim}
forall  (x : R), 0 <= x -> Real.sinh x <= x * Real.exp (x ^ 2 / 6)
\end{verbatim}}

\paragraph{L1 main theorem: e-form TV bound for softmax perturbation}
{\small\raggedright \emph{Source:} \texttt{formal/WitCert/SoftmaxTV.lean:127}; \emph{axioms:} \texttt{propext, Quot.sound, Classical.choice}.\par}
{\scriptsize\begin{verbatim}
forall  {iota : Type u_1} [inst : Fintype iota] (p ptilde eps c : iota -> R),
  (forall  (t : iota), 0 <= ptilde t) ->
    Sum  t : iota, ptilde t = 1 ->
      (forall  (t : iota), 0 <= c t) ->
        (forall  (t : iota), |eps t| <= c t) ->
          0 < WitCert.Znorm ptilde eps ->
            (forall  (t : iota), p t = ptilde t * Real.exp (-eps t) / WitCert.Znorm ptilde eps)
    ->
              WitCert.TV p ptilde <= 1 / 2 * (WitCert.Acert ptilde c ^ 2 - 1)
\end{verbatim}}

\paragraph{WitCert.Acert\_ge\_one}
{\small\raggedright \emph{Source:} \texttt{formal/WitCert/SoftmaxTV.lean:34}; \emph{axioms:} \texttt{propext, Quot.sound, Classical.choice}.\par}
{\scriptsize\begin{verbatim}
forall  {iota : Type u_1} [inst : Fintype iota] (p c : iota -> R),
  (forall  (t : iota), 0 <= p t) -> Sum  t : iota, p t = 1 -> (forall  (t : iota), 0 <= c t) ->
    1 <= WitCert.Acert p c
\end{verbatim}}

\paragraph{WitCert.ExpertSubstitution.abs\_sub\_eq\_add\_sub\_two\_min}
{\small\raggedright \emph{Source:} \texttt{formal/WitCert/ExpertSubstitution.lean:51}; \emph{axioms:} \texttt{propext, Classical.choice, Quot.sound}.\par}
{\scriptsize\begin{verbatim}
forall  (a b : R), |a - b| = a + b - 2 * (a ? b)
\end{verbatim}}

\paragraph{WitCert.ExpertSubstitution.exists\_indistinguishable\_substitutions}
{\small\raggedright \emph{Source:} \texttt{formal/WitCert/ExpertSubstitution.lean:158}; \emph{axioms:} \texttt{propext, Quot.sound, Classical.choice}.\par}
{\scriptsize\begin{verbatim}
exists  S S1 S2, S1 != S2 ? S ? S1 = S ? S2
\end{verbatim}}

\paragraph{WitCert.ExpertSubstitution.min\_routeW}
{\small\raggedright \emph{Source:} \texttt{formal/WitCert/ExpertSubstitution.lean:82}; \emph{axioms:} \texttt{propext, Quot.sound, Classical.choice}.\par}
{\scriptsize\begin{verbatim}
forall  {iota : Type u_1} [inst : Fintype iota] [inst : DecidableEq iota] (u : iota -> R) (S S'
    : Finset iota),
  (forall  (i : iota), 0 <= u i) ->
    0 < Sum  j  in  S, u j ->
      0 < Sum  j  in  S', u j ->
        forall  (i : iota),
          WitCert.ExpertSubstitution.routeW u S i ? WitCert.ExpertSubstitution.routeW u S' i =
            if i  in  S ? S' then u i / ((Sum  j  in  S, u j) ? Sum  j  in  S', u j) else 0
\end{verbatim}}

\paragraph{WitCert.ExpertSubstitution.routeW\_sum\_one}
{\small\raggedright \emph{Source:} \texttt{formal/WitCert/ExpertSubstitution.lean:75}; \emph{axioms:} \texttt{propext, Quot.sound, Classical.choice}.\par}
{\scriptsize\begin{verbatim}
forall  {iota : Type u_1} [inst : Fintype iota] [inst_1 : DecidableEq iota] (u : iota -> R) (S
    : Finset iota),
  0 < Sum  j  in  S, u j -> Sum  i : iota, WitCert.ExpertSubstitution.routeW u S i = 1
\end{verbatim}}

\paragraph{WitCert.ExpertSubstitution.subst\_tv}
{\small\raggedright \emph{Source:} \texttt{formal/WitCert/ExpertSubstitution.lean:109}; \emph{axioms:} \texttt{propext, Quot.sound, Classical.choice}.\par}
{\scriptsize\begin{verbatim}
forall  {iota : Type u_1} [inst : Fintype iota] [inst_1 : DecidableEq iota] (u : iota -> R) (S
    S' : Finset iota),
  (forall  (i : iota), 0 <= u i) ->
    0 < Sum  j  in  S, u j ->
      0 < Sum  j  in  S', u j ->
        WitCert.ExpertSubstitution.TV (WitCert.ExpertSubstitution.routeW u S)
    (WitCert.ExpertSubstitution.routeW u S') =
          1 - (Sum  i  in  S ? S', u i) / ((Sum  j  in  S, u j) ? Sum  j  in  S', u j)
\end{verbatim}}

\paragraph{WitCert.ExpertSubstitution.subst\_tv\_of\_weaker}
{\small\raggedright \emph{Source:} \texttt{formal/WitCert/ExpertSubstitution.lean:127}; \emph{axioms:} \texttt{propext, Quot.sound, Classical.choice}.\par}
{\scriptsize\begin{verbatim}
forall  {iota : Type u_1} [inst : Fintype iota] [inst_1 : DecidableEq iota] (u : iota -> R) (S
    S' : Finset iota),
  (forall  (i : iota), 0 <= u i) ->
    0 < Sum  j  in  S, u j ->
      0 < Sum  j  in  S', u j ->
        Sum  j  in  S', u j <= Sum  j  in  S, u j ->
          WitCert.ExpertSubstitution.TV (WitCert.ExpertSubstitution.routeW u S)
              (WitCert.ExpertSubstitution.routeW u S') =
            1 - (Sum  i  in  S ? S', u i) / Sum  j  in  S, u j
\end{verbatim}}

\paragraph{WitCert.ExpertSubstitution.sum\_ite\_mem\_univ}
{\small\raggedright \emph{Source:} \texttt{formal/WitCert/ExpertSubstitution.lean:57}; \emph{axioms:} \texttt{propext, Quot.sound, Classical.choice}.\par}
{\scriptsize\begin{verbatim}
forall  {iota : Type u_1} [inst : Fintype iota] [inst_1 : DecidableEq iota] (S : Finset iota)
    (g : iota -> R),
  (Sum  i : iota, if i  in  S then g i else 0) = Sum  i  in  S, g i
\end{verbatim}}

\paragraph{WitCert.ExpertSubstitution.tv\_cannot\_separate}
{\small\raggedright \emph{Source:} \texttt{formal/WitCert/ExpertSubstitution.lean:140}; \emph{axioms:} \texttt{propext, Quot.sound, Classical.choice}.\par}
{\scriptsize\begin{verbatim}
forall  {iota : Type u_1} [inst : Fintype iota] [inst_1 : DecidableEq iota] (u : iota -> R) (S
    S1 S2 : Finset iota),
  (forall  (i : iota), 0 <= u i) ->
    0 < Sum  j  in  S, u j ->
      0 < Sum  j  in  S1, u j ->
        0 < Sum  j  in  S2, u j ->
          Sum  j  in  S1, u j <= Sum  j  in  S, u j ->
            Sum  j  in  S2, u j <= Sum  j  in  S, u j ->
              S ? S1 = S ? S2 ->
                WitCert.ExpertSubstitution.TV (WitCert.ExpertSubstitution.routeW u S)
                    (WitCert.ExpertSubstitution.routeW u S1) =
                  WitCert.ExpertSubstitution.TV (WitCert.ExpertSubstitution.routeW u S)
                    (WitCert.ExpertSubstitution.routeW u S2)
\end{verbatim}}

\paragraph{WitCert.ExpertSubstitution.tv\_eq\_one\_sub\_overlap}
{\small\raggedright \emph{Source:} \texttt{formal/WitCert/ExpertSubstitution.lean:64}; \emph{axioms:} \texttt{propext, Quot.sound, Classical.choice}.\par}
{\scriptsize\begin{verbatim}
forall  {iota : Type u_1} [inst : Fintype iota] [inst_1 : DecidableEq iota] (p q : iota -> R),
  Sum  i : iota, p i = 1 -> Sum  i : iota, q i = 1 -> WitCert.ExpertSubstitution.TV p q = 1 -
    Sum  i : iota, p i ? q i
\end{verbatim}}

\paragraph{WitCert.StateRepair.balanced\_from\_empty}
{\small\raggedright \emph{Source:} \texttt{formal/WitCert/StateRepair.lean:65}; \emph{axioms:} \texttt{propext, Quot.sound}.\par}
{\scriptsize\begin{verbatim}
forall  (t : List WitCert.StateRepair.Op), WitCert.StateRepair.Balanced t ->
    WitCert.StateRepair.run [] t = []
\end{verbatim}}

\paragraph{WitCert.StateRepair.compliant\_future\_keeps\_dirty\_site}
{\small\raggedright \emph{Source:} \texttt{formal/WitCert/StateRepair.lean:79}; \emph{axioms:} \texttt{propext}.\par}
{\scriptsize\begin{verbatim}
exists  s0 t, s0 != [] ? WitCert.StateRepair.Balanced t ? WitCert.StateRepair.run s0 t != []
\end{verbatim}}

\paragraph{WitCert.StateRepair.delete\_both\_survives\_sync}
{\small\raggedright \emph{Source:} \texttt{formal/WitCert/StateRepair.lean:121}; \emph{axioms:} \texttt{none}.\par}
{\scriptsize\begin{verbatim}
forall  (r : Nat) (p : WitCert.StateRepair.Replicas),
  r ? List.erase p.remote r -> r ? (WitCert.StateRepair.pull (WitCert.StateRepair.delBoth p
    r)).local_
\end{verbatim}}

\paragraph{WitCert.StateRepair.delete\_one\_side\_then\_sync\_restores}
{\small\raggedright \emph{Source:} \texttt{formal/WitCert/StateRepair.lean:114}; \emph{axioms:} \texttt{none}.\par}
{\scriptsize\begin{verbatim}
forall  (r : Nat) (p : WitCert.StateRepair.Replicas),
  r  in  p.remote -> r  in  (WitCert.StateRepair.pull (WitCert.StateRepair.delLocal p
    r)).local_
\end{verbatim}}

\paragraph{WitCert.StateRepair.sweep\_then\_balanced\_is\_clean}
{\small\raggedright \emph{Source:} \texttt{formal/WitCert/StateRepair.lean:87}; \emph{axioms:} \texttt{propext, Quot.sound}.\par}
{\scriptsize\begin{verbatim}
forall  (s0 : WitCert.StateRepair.Site) (t : List WitCert.StateRepair.Op),
  WitCert.StateRepair.Balanced t -> WitCert.StateRepair.run s0 (WitCert.StateRepair.Op.sweep ::
    t) = []
\end{verbatim}}

\paragraph{WitCert.Znorm\_le\_Acert}
{\small\raggedright \emph{Source:} \texttt{formal/WitCert/SoftmaxTV.lean:82}; \emph{axioms:} \texttt{propext, Quot.sound, Classical.choice}.\par}
{\scriptsize\begin{verbatim}
forall  {iota : Type u_1} [inst : Fintype iota] (ptilde eps c : iota -> R),
  (forall  (t : iota), 0 <= ptilde t) -> (forall  (t : iota), -eps t <= c t) -> WitCert.Znorm
    ptilde eps <= WitCert.Acert ptilde c
\end{verbatim}}

\paragraph{WitCert.deltaLocal\_budget}
{\small\raggedright \emph{Source:} \texttt{formal/WitCert/RequestBudget.lean:36}; \emph{axioms:} \texttt{propext, Classical.choice, Quot.sound}.\par}
{\scriptsize\begin{verbatim}
forall  (deltareq : R) (L H T : Nat), 0 < L * H * T -> (L * H * T) * WitCert.deltaLocal
    deltareq L H T = deltareq
\end{verbatim}}

\paragraph{WitCert.one\_le\_Acert\_mul\_Znorm}
{\small\raggedright \emph{Source:} \texttt{formal/WitCert/SoftmaxTV.lean:52}; \emph{axioms:} \texttt{propext, Quot.sound, Classical.choice}.\par}
{\scriptsize\begin{verbatim}
forall  {iota : Type u_1} [inst : Fintype iota] (ptilde eps c : iota -> R),
  (forall  (t : iota), 0 <= ptilde t) ->
    Sum  t : iota, ptilde t = 1 -> (forall  (t : iota), eps t <= c t) -> 1 <= WitCert.Acert
    ptilde c * WitCert.Znorm ptilde eps
\end{verbatim}}

\paragraph{WitCert.pointwise\_ratio\_bound}
{\small\raggedright \emph{Source:} \texttt{formal/WitCert/SoftmaxTV.lean:90}; \emph{axioms:} \texttt{propext, Classical.choice, Quot.sound}.\par}
{\scriptsize\begin{verbatim}
forall  (a z A b : R), 0 < z -> 0 < b -> 1 <= A -> a <= b -> 1 / b <= a -> 1 <= A * z -> z <= A
    -> |a / z - 1| <= A * b - 1
\end{verbatim}}

\paragraph{WitCert.six\_pow\_mul\_factorial\_le}
{\small\raggedright \emph{Source:} \texttt{formal/WitCert/UniformDither.lean:32}; \emph{axioms:} \texttt{propext, Classical.choice, Quot.sound}.\par}
{\scriptsize\begin{verbatim}
forall  (n : Nat), 6 ^ n * n.factorial <= (2 * n + 1).factorial
\end{verbatim}}

\paragraph{WitCert.uniformProxy\_eq\_variance}
{\small\raggedright \emph{Source:} \texttt{formal/WitCert/UniformDither.lean:85}; \emph{axioms:} \texttt{propext, Classical.choice, Quot.sound}.\par}
{\scriptsize\begin{verbatim}
forall  (s : R), WitCert.uniformProxy s = s ^ 2 / 12
\end{verbatim}}

\subsection{Dependency-free layer (seconds-scale CI)}

\paragraph{Supporting lemmas (axiom-audit completeness only)}
{\small\raggedright \texttt{WitCert.blockwise\_eq\_perTokenGrouped} (\texttt{standalone/Refinement.lean:79}); \texttt{WitCert.fact\_two\_step} (\texttt{standalone/Factorial.lean:25}); \texttt{WitCert.lsum\_le\_nsmul} (\texttt{standalone/Budget.lean:54}); \texttt{WitCert.perToken\_eq\_mul\_lsum} (\texttt{standalone/Refinement.lean:57}); \texttt{WitCert.psum\_mono} (\texttt{standalone/EForm.lean:39}); \texttt{WitCert.request\_budget\_compose} (\texttt{standalone/Budget.lean:69}); \texttt{WitCert.six\_mul\_succ\_le} (\texttt{standalone/Factorial.lean:16}); \texttt{WitCert.six\_pow\_mul\_fact\_le} (\texttt{standalone/Factorial.lean:42}); \texttt{WitCert.sq\_add\_one\_ge\_two\_mul} (\texttt{standalone/EForm.lean:14}).\par}

\section{Evolution of Conclusions, Retractions, and Reproduction Protocol}
\label{app:evolution}
The bound evolved as follows: the unsound dispersion form of Sec.~\ref{sec:counterexample}; its two-token counterexample (2026-07-27); the sound $e$-form; a Bernstein variant dominated by its range term; the uniform sub-Gaussian form; request-level $\delta$ allocation; and finally online block-wise scaling with in-block outlier selection. Each step is a separate script in the repository (\texttt{experiments/p12}--\texttt{p17}) together with the measurement that motivated it and the regression test \texttt{tests/test\_certificates.py}, including the negative results (deterministic rank-one frequency coarsening, in-band random projection, static frequency selection, per-token union bounding, and the stochastic-rounding branch of Sec.~\ref{sec:designspace}) that closed off alternative designs.

Beyond the bound itself, four conclusions changed during development. We record them here, once, rather than inline. (i) The claim that WitCert's output error is $1.7$--$1.8\times$ lower than same-bit-width RTN is \emph{retracted}: that experiment used K in place of V (a K-side proxy); the correct statement, from the $2\times2$ ablation of Sec.~\ref{sec:realact}, is that dither is quality neutral ($1.01\times$) and the quality gain belongs to the outlier bypass. (ii) The conjecture that Yi's low block-scaling coverage was caused by its large \texttt{rope\_theta} is \emph{retracted}; the measured cause is the interaction of block-level scaling with diffuse channel energy (Sec.~\ref{sec:yianomaly}). (iii) An early single-run reading suggesting the meter costs nothing in serving was noise; the canonical account is the median-of-3 measurement of Sec.~\ref{sec:serving-e2e} ($+6.4\%$/$+16.2\%$ without bypass). (iv) An early verdict that throughput competitiveness ``cannot be claimed'' was formed at kernel level ($5.41\times$) and was too pessimistic: end-to-end, quantisation costs $0.75$--$0.82\times$ (Sec.~\ref{sec:serving-e2e}). The headline coverage range was also restated from a cross-$\delta$ envelope to the single-$\delta$ range 54.4--81.0\% following external review. (v) The SGLang serving coverage figures were restated under the corrected \emph{request-level} $\delta$ allocation ($\delta_{\mathrm{loc}} = \delta_{\mathrm{req}}/(L\cdot H\cdot T)$): the earlier implementation allocated only over tokens and batch, and its figures (0.866/0.987 on 1.5B, 0.759/0.947 on 7B) are retired in favour of 0.812/0.940 and 0.753/0.946. (vi) A latent unbound-variable bug in the \emph{packed}-cache fallback branch (\texttt{\_wc\_ei} for \texttt{\_wc\_i}), found by external review, was fixed and initially only smoke-tested in live serving (the gate fired, paged in 1.4 GiB over 2{,}015 calls and did not crash). The subsequent end-to-end run exposed a second bug: CPU backing had been allocated but never populated, so repair paged in zeros and quality fell to 2.0. After the one-line write-path fix, packed gating restored the score to 78.83, identical to the ungated packed path (Sec.~\ref{sec:h200}). The benchmark-scale gating results of Sec.~\ref{sec:gatedeval} are unaffected---they run through the observatory pool, whose fallback branch was always correct.

\section{The Rigidity Proposition for Static Equivariant Compression}
\label{app:rigidity}
This proposition predates the certificate line of this paper and survived the audit that retracted the rest of that early version; Related Work uses it to place the static RoPE-aware methods.

\textbf{Setting.} (Within this appendix, $p$, $r$ and $c$ denote the complexified dimension, the sketch rank and an output index---unrelated to the attention notation of the body.) A single head of dimension $d$, complexified to $p = d/2$ coordinates: RoPE acts at position $n$ as the diagonal unitary $D_n = \mathrm{diag}(e^{in\theta_j})_{j\le p}$, with the $\theta_j$ pairwise distinct and no $\theta_j + \theta_{j'} \equiv 0 \pmod{2\pi}$. Call a real-linear $\mathcal{S}: \mathbb{C}^p \to \mathbb{C}^r$ an \emph{equivariant sketch} (script $\mathcal{S}$, to avoid clashing with the sequence length $S$) if some position representation $D'_m$ on $\mathbb{C}^r$ satisfies $\mathcal{S} D_n = D'_n \mathcal{S}$ for all $n \in \mathbb{Z}$---equivariance being what keeps a compressed cache queryable by relative position.

\begin{proposition}[Rigidity]\label{prop:rigidity}
Every equivariant $\mathcal{S}$ is a frequency selection with per-frequency complex scalars: each output coordinate depends on a single input frequency, $(\mathcal{S}z)_c = \alpha_c\, z_{j(c)}$. If $r < p$, then $p - r$ frequencies are necessarily dropped as whole pairs; exact logit preservation additionally forces $|\alpha_c| = 1$.
\end{proposition}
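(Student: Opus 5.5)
The plan is to reduce equivariance at the single generator $n=1$ to an intertwining relation between two diagonalizable operators with simple input spectrum, and then read off the matrix of $\mathcal{S}$ coordinate by coordinate. Since $D_n = D_1^n$ and $D'_n = D_1'^{\,n}$ for a representation of $\mathbb{Z}$, the hypothesis is equivalent to $\mathcal{S} D_1 = D'_1 \mathcal{S}$. I would assume, as the setting implicitly does, that $D'$ is unitary (a position representation on a cache that is queried by inner products). Then $D'_1$ is diagonalizable, and after a unitary change of output basis we may write $D'_1 = \mathrm{diag}(e^{i\varphi_c})_{c\le r}$. The output index $c$ in the statement is understood in this eigenbasis.

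Because $\mathcal{S}$ is only real-linear, I would write it as $\mathcal{S}z = Mz + N\bar z$ with complex matrices $M,N$. Substituting into $\mathcal{S}D_1 z = D'_1\mathcal{S}z$ and separating the $\mathbb{C}$-linear and $\mathbb{C}$-antilinear parts (they are independent as functions of $z$) gives two entrywise identities: $M_{cj}\bigl(e^{i\theta_j} - e^{i\varphi_c}\bigr) = 0$ and $N_{cj}\bigl(e^{-i\theta_j} - e^{i\varphi_c}\bigr) = 0$. The spectral hypotheses now do all the work. Distinctness of the $\theta_j$ means that each row of $M$ has at most one nonzero entry, and the same holds for each row of $N$. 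The condition $\theta_j + \theta_{j'} \not\equiv 0$ says that no frequency coincides with the negative of any frequency; the case $j = j'$ is included, which excludes $\theta_j \equiv 0, \pi$. Hence no $\varphi_c$ can satisfy both equations, so in each row either $M$ or $N$ vanishes. Therefore $(\mathcal{S}z)_c$ equals $\alpha_c z_{j(c)}$ or $\alpha_c \overline{z_{j(c)}}$. In real coordinates both options are a $2\times2$ similitude acting on the single RoPE pair $j(c)$, which is the ``per-frequency scalar'' of the statement. I would state the conjugate case explicitly, or remove it by the relabelling $\theta \mapsto -\theta$ on that output coordinate.

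The counting claim follows at once. At most $r$ input frequencies appear among the $j(c)$, so when $r<p$ at least $p-r$ indices $j$ lie in $\ker \mathcal{S}$. Because every output depends on a whole complex coordinate $z_j$, a frequency is either used or dropped entirely, never half a pair.

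For the last clause, exact logit preservation means $\mathrm{Re}\langle \mathcal{S}q, \mathcal{S}k\rangle = \mathrm{Re}\langle q,k\rangle$ for all $q,k$. Testing this on $q = k = e_j$ gives $\sum_{c:\,j(c)=j} |\alpha_c|^2 = 1$ for every $j$. In particular every frequency must be kept, so preservation is incompatible with $r<p$. Under the selection reading, where each kept frequency occupies one output slot, this identity is exactly $|\alpha_c| = 1$.

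I expect the main obstacle to be the setup rather than the algebra. First, $D'$ must be justified as (or assumed to be) unitarily diagonalizable, since a non-semisimple $D'_1$ breaks the coordinatewise reading. Second, the antilinear branch and repeated output frequencies must be handled cleanly, so that ``depends on a single frequency'' and ``$|\alpha_c|=1$'' are stated in exactly the form that holds.
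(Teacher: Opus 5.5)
Your proposal is correct and follows the paper's own argument. You split $\mathcal{S}$ into its $\mathbb{C}$-linear and antilinear parts, diagonalize $D'$, derive the entrywise identities $M_{cj}(e^{i\theta_j}-e^{i\varphi_c})=0$ and $N_{cj}(e^{-i\theta_j}-e^{i\varphi_c})=0$, and use distinctness of the $\theta_j$ to get at most one nonzero entry per row.

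You are also more careful than the paper on points it leaves implicit. The hypothesis $\theta_j+\theta_{j'}\not\equiv 0$ only prevents a single row from mixing $M$- and $N$-entries, so the conjugate branch survives and disappears only under the orientation flip the paper mentions parenthetically. The counting and logit-preservation clauses are not proved in the paper at all, and your derivation of them is sound. Exact logit preservation gives $\sum_{c:\,j(c)=j}|\alpha_c|^2=1$. That forces every frequency to be kept, so it is incompatible with $r<p$, and it yields $|\alpha_c|=1$ only when each frequency occupies a single output slot.
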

\begin{proof}
Decompose $\mathcal{S} = \mathcal{A} + \mathcal{B} \circ \mathrm{conj}$ with $\mathcal{A}$ complex linear and $\mathcal{B}$ conjugate linear, and diagonalize $D'_n = \mathrm{diag}(e^{in\varphi_c})$. Entrywise, commutation gives $\mathcal{A}_{cj}\,(e^{in\theta_j} - e^{in\varphi_c}) = 0$ for all $n$, so $\mathcal{A}_{cj} \ne 0$ only if $\varphi_c \equiv \theta_j$; the entries of $\mathcal{B}$ would require $\varphi_c \equiv -\theta_j$, which the frequency assumption excludes (up to flipping a 2D block's orientation). Since the $\theta_j$ are pairwise distinct, each output row has at most one nonzero entry.
\end{proof}
\noindent RAP's whole-pair pruning is the $\{0,1\}$ special case: in the strictly equivariant world there is no cleverer static compression. The proposition's natural relaxations were rejected experimentally in the retracted early version (deterministic rank-1 spectral coarsening, in-band random projection, static selection), which is what motivated the runtime route taken by this paper.

\bibliographystyle{plain}
\bibliography{refs}
\end{document}